\newcounter{subcopyrightbox@save}
\newtheorem{theorem}{Theorem}
\newtheorem{lemma}[theorem]{Lemma}
\newtheorem{definition}[theorem]{Definition}
\newcommand{\argmax}{\operatornamewithlimits{argmax}}
\newcommand{\argmin}{\operatornamewithlimits{argmin}}
\newcommand{\myparatight}[1]{\smallskip\noindent{\bf {#1}:}~}
\begin{document}

\title{Calibrate: Frequency Estimation and Heavy Hitter Identification with Local Differential Privacy via Incorporating Prior Knowledge}

\author{
    \IEEEauthorblockN{Jinyuan Jia, Neil Zhenqiang Gong}
    \IEEEauthorblockA{Department of Electrical and Computer Engineering, Iowa State University}
    \IEEEauthorblockA{\{jinyuan, neilgong\}@iastate.edu}
}


\maketitle

\begin{abstract}
Estimating frequencies of certain items among a population is a basic step in data analytics, which enables more advanced data analytics (e.g., heavy hitter identification, frequent pattern mining), client software optimization, and detecting unwanted or malicious hijacking of user settings in browsers. Frequency estimation and heavy hitter identification with \emph{local differential privacy (LDP)} protect user privacy as well as the data collector. Existing LDP algorithms cannot leverage 1) prior knowledge about the noise in the estimated item frequencies and 2) prior knowledge about the true item frequencies. As a result, they achieve suboptimal performance in practice. 

In this work, we aim to design LDP algorithms that can leverage such prior knowledge. Specifically, we design ${Calibrate}$ to incorporate the prior knowledge via statistical inference. ${Calibrate}$ can be appended to an existing LDP algorithm to reduce its estimation errors. We model the prior knowledge about the noise and the true item frequencies as two probability distributions, respectively. Given the two probability distributions and an estimated frequency of an item produced by an existing LDP algorithm, our ${Calibrate}$ computes the conditional probability distribution of the item's  frequency and uses the mean of the conditional probability distribution as the calibrated frequency for the item. It is challenging to estimate the two probability distributions due to data sparsity. We address the challenge via integrating techniques from statistics and machine learning. Our empirical results on two real-world datasets show that ${Calibrate}$ significantly outperforms state-of-the-art LDP algorithms for frequency estimation and heavy hitter identification. 

\end{abstract}

\section{Introduction}
In \emph{frequency estimation}, a \emph{data collector} aims to estimate the frequencies of certain \emph{items} among a population, where frequency of an item is the number of users that have the item. 
Frequency estimation is a basic research problem in data analytics and networking services. 
For instance, 
Google may be interested in estimating how many users set a particular webpage as the default 
homepage of Chrome~\cite{Erlingsson:2014}, where Google is the data collector and each webpage is an item; and an app developer may be interested in estimating how many users adopt a certain feature of the app, where the app developer is the data collector and each feature of the app is an item. Such {frequency estimation} is often the first step to perform more advanced data analytics, optimize client software (e.g., web services, mobile apps), and detect unwanted or malicious hijacking of user settings in browsers~\cite{Erlingsson:2014}. For instance, after estimating item frequencies, the data collector can identify the items whose frequencies are larger than a given threshold, which is called \emph{heavy hitter identification}. 

A naive solution for frequency estimation or heavy hitter identification is to ask each user to share its item or set of items with the {data collector}, who can compute the items' frequencies easily. However, this naive solution faces two challenges. First, when the items are sensitive, users may not be willing to share their raw items with the data collector. Second, the data collector could be vulnerable to insider attacks and could be compromised to leak the users' items, which frequently happens in real world, e.g., Equifax was recently compromised and personal data of 143 million users were leaked~\cite{equifaxdataleak}.  

Local differential privacy (LDP)~\cite{Erlingsson:2014,BassilySuccinctHistograms15,kairouz2016discrete,QinHeavyHitterCCS16, Ninghui:2017, AppLDP17,ding2017collecting,duchi2013local,ye2017optimal,wang2017local,qin2017generating,bun2018heavy,smith2017interaction,duchi2013localb,bassily17pratical,wang2018local,wang2018itemset,zhang2018calm,acharya2018hadamard}, a privacy protection mechanism based on $\epsilon$-differential privacy~\cite{Dwork:2006}, can address both challenges. Several algorithms~\cite{Erlingsson:2014,BassilySuccinctHistograms15,kairouz2016discrete,QinHeavyHitterCCS16,Ninghui:2017,AppLDP17,ding2017collecting} for frequency estimation with LDP have been proposed recently. Moreover, LDP was deployed by Google Chrome~\cite{Erlingsson:2014}, Apple~\cite{AppLDP17}, Microsoft~\cite{ding2017collecting}, and Uber~\cite{UberDP}. In fact, LDP is the first privacy mechanism that was widely deployed in industry. These LDP algorithms essentially consist of three steps, i.e., $Encode$, $Perturb$, and $Aggregate$~\cite{Ninghui:2017}. $Encode$ and $Perturb$ are executed at client side for each user, while $Aggregate$ is executed at the data collector side.  
$Encode$ translates a user's item into a number or vector; $Perturb$ perturbs a user's encoded number or vector to preserve LDP, and sends the perturbed value to the data collector;  $Aggregate$ estimates item frequencies using the perturbed values from all users.  

However, existing LDP algorithms have two key limitations. Specifically, the estimated frequency of an item produced by the $Aggregate$ step is the sum of the true item frequency and some noise. First, existing LDP algorithms do not leverage the prior knowledge about the noise to filter them. Second, in many scenarios, the data collector could have prior knowledge about the true item frequencies, but existing LDP algorithms cannot leverage such prior knowledge. For instance, many real-world phenomena such as video popularity, webpage click frequency, node degrees in social networks--follow power-law distributions~\cite{cha2007tube,Clauset09,Gong12-imc}; human height follows a Gaussian distribution~\cite{a2009height}. In a hybrid LDP setting~\cite{Blender:2017}, some \emph{opt-in} users share their true items with the data collector, who could obtain prior knowledge about the distribution of the true item frequencies from such {opt-in} users. Moreover, hypothesis testing with LDP~\cite{gaboardi2017local} can also help to determine the distribution family of the true item frequencies.  Due to these two limitations, existing LDP algorithms achieve suboptimal performance. 

\begin{figure}[!t]
\centering
\subfloat{\includegraphics[width= 0.45\textwidth]{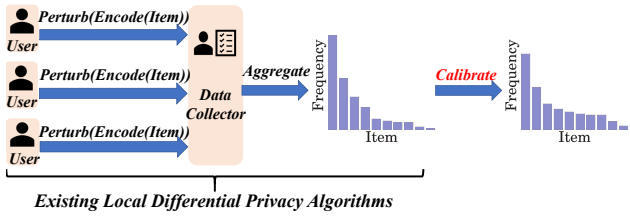}}
\caption{Our statistical inference framework.}
\vspace{-2mm}
\label{algorithma}
\end{figure}

\myparatight{Our work} In this work, we aim to address the two limitations of existing LDP algorithms. Towards this goal, we propose a $Calibrate$ step, which incorporates the prior knowledge about the noise and the true item frequencies via statistical inference. 
Our $Calibrate$ step can be appended to the $Aggregate$ step to reduce the noise and increase accuracies for an existing LDP algorithm. Figure~\ref{algorithma} illustrates existing LDP algorithms and our $Calibrate$ step. 
 We model the prior knowledge about the noise and the true item frequencies as probability distributions $p_s$ and $p_f$, respectively. The two probability distributions mean that, if we sample an item uniformly at random, then 1) the item's estimated frequency produced by an existing LDP algorithm has a noise $\delta$ with a  probability $p_s(\delta)$, and 2) the item's true frequency is $k$ with a probability $p_f(k)$. 
Given the two probability distributions and the estimated frequency of an item produced by a certain LDP algorithm, we compute the conditional probability distribution of the item's frequency, and we use the mean of the conditional probability distribution as a calibrated frequency for the item. We theoretically show that, under the same conditions (i.e., given the two probability distributions and an estimated item frequency), our $Calibrate$ is the \emph{optimal} estimator to refine the item frequency. 

Implementing our $Calibrate$ step faces two challenges. First, how to estimate the probability distribution $p_s$ of the noise? Second, how to estimate the probability distribution $p_f$ of the true item frequencies. To address the first challenge, we theoretically show that, for state-of-the-art LDP algorithms, the noise follows a Gaussian distribution with mean 0 and a known variance. To address the second challenge, we consider the data collector knows the distribution family that $p_f$ belongs to.  A distribution family is a group of probability distributions that have similar properties, e.g., power-law distribution is a distribution family that we would observe in many real-world phenomena. A distribution family is parameterized by certain parameters, e.g., a power-law distribution is parameterized by an exponent. 
We design a \emph{mean-variance method} to estimate the parameters in the probability distribution $p_f$ efficiently.

We perform extensive experiments on two real-world datasets to evaluate our $Calibrate$ step and compare it with state-of-the-art LDP algorithms for frequency estimation and heavy hitter identification. Our results demonstrate that once we append our $Calibrate$ step to an existing LDP algorithm, we can increase accuracies significantly.  

In summary, our contributions are as follows:
\begin{itemize}
\item We propose a statistical inference framework called $Calibrate$ to incorporate prior knowledge about 1) noise in the estimated item frequencies and 2) true item frequencies. Our $Calibrate$ can be appended to an existing LDP algorithm to improve accuracies. 

\item We design methods to estimate the probability distributions that model such prior knowledge. 

\item We perform extensive experiments on two real-world datasets to evaluate $Calibrate$. Our results show that $Calibrate$ significantly outperforms state-of-the-art LDP algorithms for frequency estimation and heavy hitter identification. 

\end{itemize}

\section{Background and Related Work}
\label{relatedwork}

\subsection{Frequency Estimation and Heavy Hitter Identification}
\myparatight{Frequency estimation} Suppose we have $d$ items (denoted as $\{1,2,\cdots,d\}$) and $n$ users. Each user has an item.\footnote{Our techniques can also be applied to the scenarios where each user has a set of items.}   
A data collector aims to compute the frequency $f_i$ of each item among the $n$ users, i.e., $f_i$ is the number of users that have the item $i$.  
For instance,  an item is Yes or No when the data collector is interested in estimating the number of users that are HIV positive and negative in a survey; 
an item is a webpage when the data collector (e.g., Google) aims to estimate the number of users that set a particular webpage as a browser's default homepage; an item is a feature of a mobile app when the data collector (e.g., the app developer) aims to estimate the number of users that use each feature. 

\myparatight{Heavy hitter identification} A direct application of frequency estimation is heavy hitter identification. Specifically, given a threshold, heavy hitter identification aims to detect the items whose frequencies are larger than the threshold. Heavy hitter identification
is a basic research problem in data analytics with many applications, such as trend monitoring, marketing analysis, and anomaly
detection. 



%


\subsection{Local Differential Privacy Algorithms}
The first  local differential privacy (LDP) algorithm for frequency estimation called \emph{randomized response} dates back to 1960s~\cite{RR1965}. 
Recently, several LDP algorithms for frequency estimation~\cite{Erlingsson:2014,kairouz2016discrete,QinHeavyHitterCCS16,Ninghui:2017,AppLDP17,ding2017collecting} were proposed, e.g., RAPPOR~\cite{Erlingsson:2014},  $k$-RR~\cite{kairouz2016discrete},  and Optimized Unary Encoding (OUE)~\cite{Ninghui:2017}.  An existing LDP algorithm $\mathcal{A}$ essentially consists of three functions, i.e., $\mathcal{A}=(Encode, Perturb, Aggregate)$. Figure~\ref{algorithma} illustrates the three key functions for frequency estimation with LDP. The $Encode$ function encodes a user's  item into a numerical value or a vector;  the $Perturb$ function perturbs a user's encoded value or vector such that local differential privacy is achieved; the $Aggregate$ function estimates item frequencies from the perturbed encoded values or vectors from all users. The $Encode$ and $Perturb$ functions are performed at client side for every user, while the $Aggregate$ function is executed at the data collector side. For simplicity, we denote $Perturb(Encode(i))$ as $PE(i)$, where $i$ is an item. 




  
Roughly speaking, in LDP, any two items have close probabilities to be mapped to the same perturbed numerical value or vector. 
Moreover, state-of-the-art LDP algorithms (e.g., basic RAPPOR~\cite{Erlingsson:2014}, $k$-RR~\cite{kairouz2016discrete}, and OUE~\cite{Ninghui:2017}) achieve \emph{pure local differential privacy}~\cite{Ninghui:2017}, which is formally defined as follows:
\begin{definition}[{\bf Pure Local Differential Privacy}]
A randomized algorithm $\mathcal{A}=(Encode, Perturb, Aggregate)$ achieves pure local differential privacy if and only if there exists two probability values $p^*$ and $q^*$ such that for all item  $i$.
\begin{align}
&Pr[PE(i)) \in \{t|i\in Support(t)\}]=p^*, \\
\forall_{j\neq i}&Pr[PE(j) \in \{t| i \in Support(t)\}]=q^*,
\end{align}
where $Support(t)$ is the set of items that a perturbed numerical value or vector $t$ supports. 
\end{definition}
For any pure LDP algorithm, the data collector can use the following equation to estimate item frequencies in the Aggregate function~\cite{Ninghui:2017}:
\begin{align}
\label{agg}
\hat{f}_i=\frac{\sum_{u}\mathbbm{1}_{Support(t_u)}(i)-nq^*}{p^*-q^*},
\end{align}
where $\hat{f}_i$ is the estimated frequency for item $i$, $t_u$ is the perturbed encoded output of user $u$, and the function $\mathbbm{1}_{Support(t_u)}(i)$ is 1 if $t_u$ supports the item $i$, otherwise it is 0. Intuitively, for every perturbed encoded output from users, we add one count to the items that are supported by the output. In the end, we normalize the counts using the probabilities $p^*$ and $q^*$. Different pure LDP algorithms use different $Encode$ and/or $Perturb$ functions, and thus they have different $Support$ functions, $p^*$, and $q^*$.  

Next, we use a state-of-the-art pure LDP algorithm called OUE~\cite{Ninghui:2017} as an example to illustrate the three functions. OUE is an optimized version of the basic RAPPOR algorithm proposed by Erlingsson et. al.~\cite{Erlingsson:2014}. 

\myparatight{Encode} OUE uses \emph{unary encoding} to encode an item. Specifically, OUE uses a length-$d$ binary vector $\mathbf{X}$ to encode the $d$ items. If a user has item $i$, then $\mathbf{X}[i]=1$ and all other entries of $\mathbf{X}$ are 0.   

\myparatight{Perturb} In this step, OUE perturbs a binary vector $\mathbf{X}$ into another binary vector $\mathbf{Y}$ probabilistically bit by bit. Specifically, we have:
\begin{align}
\label{oueperturb}
\text{Pr}(\mathbf{Y}[i]=1)=\left \{ 
\begin{aligned}
p=&\frac{1}{2}, \text{ if }\mathbf{X}[i]=1 \\
q=&\frac{1}{1+e^{\epsilon}}, \text{ if }\mathbf{X}[i]=0 
\end{aligned}
\right.
\end{align}
If a bit in $\mathbf{X}$  is $1$, then the corresponding bit in $\mathbf{Y}$ will be 1 with a probability $\frac{1}{2}$. However, if a bit in $\mathbf{X}$  is $0$, then the corresponding bit in $\mathbf{Y}$ will be 1 with a  probability $\frac{1}{1+e^{\epsilon}}$, where $\epsilon$ is the privacy budget. 

\myparatight{Aggregate} Once the data collector receives the perturbed binary vectors $\mathbf{Y}$ from users, the data collector estimates the frequency $\hat{f}_i$ of item $i$ using Equation~\ref{agg} with the function $\mathbbm{1}_{Support(t_u)}(i)=\mathbf{Y}_{u}[i]$, $p^*=p$, and $q^*=q$, where $\mathbf{Y}_{u}$ is the perturbed binary vector from user $u$.

After estimating the item frequencies, we can identify heavy hitters. We note that some studies~\cite{Hsu12,BassilySuccinctHistograms15,QinHeavyHitterCCS16,bassily17pratical,wang2017locally} designed LDP algorithms to identify top-$k$ heavy hitters, which are the $k$ items that have the largest frequencies. However, such algorithms cannot identify threshold-based heavy hitters, because they do not consider the item frequencies.

\subsection{Evaluation Metrics} 

\myparatight{Frequency estimation} Since a LDP algorithm is a randomized algorithm, the estimated frequency $\hat{f}_i$ is a random variable, which means that every time the data collector executes the LDP algorithm, the estimated frequency $\hat{f}_i$ could be different. 
Therefore, like previous studies~\cite{Ninghui:2017,Erlingsson:2014,BassilySuccinctHistograms15,kairouz2016discrete}, we use the \emph{mean square error (MSE)} of the random variable $\hat{f}_i$ to measure error of a LDP algorithm at estimating the frequency of item $i$.  Specifically, the MSE for item $i$ is defined as $MSE(\hat{f}_i, f_i)=E(\hat{f}_i - f_i)^2$, where $f_i$ is the true frequency of item $i$, $E$ represents expectation, and the expectation is taken with respect to the probability distribution of $\hat{f}_i$.  Note that if $\hat{f}_i$ is an unbiased estimator, which means that its expectation equals $f_i$, then the MSE for item $i$ is the variance of the random variable $\hat{f}_i$.
Moreover, the estimation error of a LDP algorithm $\mathcal{A}$ is defined as the average MSE of estimating frequencies of the $d$ items. Formally, estimation error of an algorithm $\mathcal{A}$ is computed as follows: 
\begin{align}
\label{error}
\text{\bf Estimation Error of } \mathcal{A}=\frac{1}{d}\sum_{i=1}^d MSE(\hat{f}_i, f_i).
\end{align}

\myparatight{Heavy hitter identification} Given a threshold, we define an item as a True (False) Positive if the item has a true frequency larger (smaller) than the threshold and is estimated to have a frequency larger than the threshold. We also define an item as a False Negative if the item has a true frequency larger than the threshold but is estimated to have a frequency smaller than the threshold. We use standard metrics in information retrieval to measure the quality of heavy hitter identification: 
\begin{align}
\text{Precision} &= \frac{\text{True Positive}}{\text{True Positive + False Positive}} \\
\text{Recall} &= \frac{\text{True Positive}}{\text{True Positive + False Negative}} \\
\text{F-Score} &= \frac{2\cdot \text{Precision} \cdot \text{Recall}}{\text{Precision + Recall}}  
\end{align}

\section{Our $Calibrate$ Framework}

Existing LDP algorithms consist of three steps $Encode$, $Perturb$, and $Aggregate$. 
Our $Calibrate$ can be appended to an existing LDP algorithm as the fourth step. 

\subsection{Overview of $Calibrate$}
\myparatight{Formulating $Calibrate$ as an optimization problem} Suppose we have an LDP algorithm $\mathcal{A}$. The data collector estimates the frequency of an item $i$ to be $\hat{f}_i$ via executing the algorithm.  $\hat{f}_i$ is a sum of the true frequency $f_i$ of the item $i$ and a noise. Specifically, we split the estimated frequency $\hat{f}_i$ as follows:
\begin{align}
\label{relation}
\hat{f}_i=f_i+s_i, \  i \in \{1, 2, \cdots, d\},
\end{align}
where $s_i$ is a noise. We model the noise as a random variable $s$, whose probability distribution is denoted as $p_s$. We view the $d$ noise $s_1, s_2, \cdots, s_d$ as random samples from the probability distribution $p_s$. We model the true item frequency as a random variable $f$, whose probability distribution is denoted as $p_f$. We view the $d$ true item frequencies $f_1, f_2, \cdots, f_d$ as random samples from the probability distribution $p_f$. We model the estimated item frequency as a random variable $\hat{f}$, whose probability distribution is denoted as $p_{\hat{f}}$. We view the $d$ estimated item frequencies $\hat{f}_1, \hat{f}_2, \cdots, \hat{f}_d$ as random samples from the probability distribution $p_{\hat{f}}$. 
Specifically, if we sample an item $i$ uniformly at random from the $d$ items, then the item has a true frequency of $k$ with a probability of $p_f(f=k)$, the item has an estimated frequency of $\hat{k}$ with a probability of $p_{\hat{f}}(\hat{f}=\hat{k})$, and the noise has a value of $\delta$ with a probability of $p_s(s=\delta)$.

The probability distributions $p_s$ and $p_f$ model the prior knowledge about the noise and the true item frequencies, respectively. 
For instance,  as we will demonstrate in Section~\ref{estimateps}, for pure LDP algorithms, $p_s$ can be well approximated as a Gaussian distribution with a known mean and variance; for many application domains (e.g., video popularity, webpage click frequency, and node degrees in social networks~\cite{cha2007tube,Clauset09}), $p_f$ can be parameterized as a power-law distribution, though the parameters in the power-law distribution have to be estimated from the observed estimated item frequencies $\hat{f}_1, \hat{f}_2, \cdots, \hat{f}_d$.


Given Equation~\ref{relation}, we model the relationships between the random variables $s$, $f$, and $\hat{f}$ as $\hat{f} = f + s$. In other words, we model frequency estimation with LDP as a probabilistic generative process: for a randomly sampled item, the item's true item frequency is sampled from the probability distribution $p_f$, a noise is sampled from the probability distribution $p_s$, and an existing LDP algorithm estimates the item's frequency as the sum of the true frequency and the noise. 

In the probabilistic generative process, we observe an item's estimated frequency produced by an existing LDP algorithm. Our $Calibrate$ step aims to ``reverse'' the generative process to find the true item frequency.  Specifically, given a frequency estimation $\hat{f}_i$ and the three probability distributions $p_s$, $p_f$, and $p_{\hat{f}}$, our $Calibrate$ step aims to produce a calibrated frequency estimation $\tilde{f}_i$, such that the MSE is minimized. More formally, we aim to obtain  $\tilde{f}_i$ via solving the following optimization problem: 
\begin{align}
\label{optimization}
&\tilde{f}_i = \argmin_{f'} E((f' - f)^2 | \hat{f} = \hat{f}_i)  \nonumber \\
\text{subject to: } &\hat{f}_i= f + s, \nonumber \\
			&f \varpropto p_f, \nonumber \\
			&s \varpropto p_s, 
\end{align}
where the expectation is taken with respect to the random variable $f$ conditioned on that the estimated item frequency is $\hat{f} = \hat{f}_i$, $f \varpropto p_f$ means that $f$ is a random variable whose probability distribution is $p_f$, and $s \varpropto p_s$ means that $s$ is a random variable whose probability distribution is $p_s$. 

%

\myparatight{Conditional expectation as an optimal solution to the optimization problem}
Given the estimated item frequency $\hat{f}_i$ and the probability distributions $p_s$, $p_f$, and $p_{\hat{f}}$, we can compute a \emph{conditional probability distribution} of the random variable $f$, which models the knowledge we have about the true frequency of item $i$ after observing the estimated item frequency $\hat{f}_i$. Specifically, according to the Bayes' rule~\cite{bayes1763}, we have:
\begin{align}
\text{Pr}(f=k|\hat{f} = \hat{f}_i) &= \frac{\text{Pr}(f=k, \hat{f} = \hat{f}_i)}{\text{Pr}(\hat{f} = \hat{f}_i)} \nonumber \\
&=\frac{\text{Pr}(\hat{f} = \hat{f}_i|f=k)\text{Pr}(f=k)}{\text{Pr}(\hat{f} = \hat{f}_i)} \nonumber \\
\label{post}
&=\frac{p_s(s=\hat{f}_i - k)p_f(f=k)}{p_{\hat{f}}(\hat{f} = \hat{f}_i)},
\end{align}
where $p_s(s=\hat{f}_i - k)$ is the probability that the noise $s$ is $\hat{f}_i - k$.

Our $Calibrate$ step computes the expectation of the conditional probability distribution in Equation~\ref{post} as the calibrated frequency estimation of item $i$. Formally, 
$Calibrate$ estimates $\tilde{f}_i$ as follows:
\begin{align}
\label{bayes}
\tilde{f}_i = \sum_{k} k\cdot \text{Pr}(f=k|\hat{f} = \hat{f}_i).
\end{align}

We show that our conditional expectation based calibrator in Equation~\ref{bayes} is an optimal solution to the optimization problem in Equation~\ref{optimization} as follows: 
\begin{theorem}
\label{optimality}
Our conditional expectation based calibrator in Equation~\ref{bayes} achieves the minimum MSE $E((f' - f)^2|\hat{f} = \hat{f}_i)$ among all calibrators $f'$. Specifically, we have:
\begin{align}
E((\tilde{f}_i - f)^2 | \hat{f} = \hat{f}_i) \leq E((f' - f)^2 | \hat{f} = \hat{f}_i), \text{ for all } f'. \nonumber
\end{align}

\end{theorem}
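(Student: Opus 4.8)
The plan is to recognize this as the classical minimum mean-square-error (MMSE) characterization of the conditional expectation, and to prove it by the standard ``add and subtract the conditional mean'' completion-of-square argument. The key observation I would exploit first is that, once we condition on the event $\hat{f} = \hat{f}_i$, any candidate calibrator $f'$ is a fixed (deterministic) number: it is the single value we commit to outputting for this particular observation. The only randomness remaining in the conditional expectation $E(\cdot \mid \hat{f} = \hat{f}_i)$ comes from the true frequency $f$ through its conditional distribution given in Equation~\ref{post}. This lets me treat $f'$ and $\tilde{f}_i$ as constants that can be pulled outside the conditional expectation.

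The core algebraic step I would carry out is to insert $\tilde{f}_i$ into the squared error and expand:
\begin{align}
E((f' - f)^2 \mid \hat{f} = \hat{f}_i) = E\big((f' - \tilde{f}_i) + (\tilde{f}_i - f)\big)^2 \mid \hat{f} = \hat{f}_i). \nonumber
\end{align}
Expanding the square produces three terms: a constant term $(f' - \tilde{f}_i)^2$, a cross term, and the term $E((\tilde{f}_i - f)^2 \mid \hat{f} = \hat{f}_i)$ that equals the MSE of our calibrator. The decisive point is that the cross term vanishes. Since $f' - \tilde{f}_i$ is constant given the observation, I can factor it out of the conditional expectation, leaving $2(f' - \tilde{f}_i)\,E(\tilde{f}_i - f \mid \hat{f} = \hat{f}_i)$. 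By the very definition of $\tilde{f}_i$ in Equation~\ref{bayes} as the conditional mean $E(f \mid \hat{f} = \hat{f}_i)$, this residual expectation $E(\tilde{f}_i - f \mid \hat{f} = \hat{f}_i) = \tilde{f}_i - \tilde{f}_i = 0$, so the entire cross term is zero.

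With the cross term eliminated, I am left with the identity
\begin{align}
E((f' - f)^2 \mid \hat{f} = \hat{f}_i) = (f' - \tilde{f}_i)^2 + E((\tilde{f}_i - f)^2 \mid \hat{f} = \hat{f}_i). \nonumber
\end{align}
Because $(f' - \tilde{f}_i)^2 \geq 0$ for every choice of $f'$, the right-hand side is bounded below by $E((\tilde{f}_i - f)^2 \mid \hat{f} = \hat{f}_i)$, which is exactly the claimed inequality, with equality precisely when $f' = \tilde{f}_i$. This establishes optimality and, as a bonus, shows the minimizer is unique.

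Honestly, there is no substantive obstacle in this proof; the result is textbook. The one place where care is genuinely needed is the justification that $f' - \tilde{f}_i$ factors out of the conditional expectation, and that the conditional-mean identity $E(f \mid \hat{f} = \hat{f}_i) = \tilde{f}_i$ holds with respect to the same conditional distribution used throughout. Both follow directly from the definition in Equation~\ref{bayes}, so I would state them explicitly and keep the rest of the argument as a short, self-contained completion of the square.
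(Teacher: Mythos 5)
Your proof is correct. It reaches the same conclusion as the paper but by a slightly different route: the paper isolates the fact as a standalone lemma (for a random variable $X$, $E(X-t)^2$ is minimized at $t = E(X)$), proves it by expanding $E(X-t)^2 = E(X^2) - 2t\mu + t^2$ and setting the derivative with respect to $t$ to zero, and then applies the lemma with $X$ taken to be $f$ under the conditional distribution given $\hat{f}=\hat{f}_i$. You instead complete the square around the conditional mean, showing the cross term vanishes and obtaining the exact identity $E((f'-f)^2 \mid \hat{f}=\hat{f}_i) = (f'-\tilde{f}_i)^2 + E((\tilde{f}_i-f)^2 \mid \hat{f}=\hat{f}_i)$. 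The two arguments are mathematically equivalent, but yours buys a little more: it avoids calculus entirely, it exhibits the excess MSE explicitly as $(f'-\tilde{f}_i)^2$, and it immediately gives uniqueness of the minimizer, which the paper's first-order-condition argument asserts without checking the second-order condition (harmless here, since the objective is a convex quadratic in $t$). Your care in noting that $f'$ and $\tilde{f}_i$ are constants under the conditioning, so they factor out of the conditional expectation, is exactly the point the paper handles implicitly by phrasing the lemma for a generic random variable $X$ and a deterministic $t$.
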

\begin{proof}
See Appendix~\ref{prooftheorem1}.
\end{proof}


\myparatight{Relationship and difference with Bayesian inference} Using the terminology of the standard Bayesian inference, it seems like that the probability distribution $p_f$ could be interpreted as a prior probability distribution of the true item frequency and the conditional probability distribution of the random variable $f$, which is shown in Equation~\ref{post}, could be interpreted as the posterior probability distribution of the item frequency after observing the estimated item frequency. However, the key difference with the standard Bayesian inference is that the prior probability distribution is independent from the observed data (i.e., the observed estimated item frequencies in our problem) in standard Bayesian inference, while we estimate the parameters in $p_f$ using the observed estimated item frequencies, i.e., $p_f$ is a \emph{data-dependent prior}.

We note that any post-processing of a differential privacy algorithm also achieves differential privacy with the same privacy guarantee~\cite{Dwork:2006}. Therefore, $Calibrate$ does not sacrifice  privacy guarantees. Next, we will discuss how to estimate the probability distributions $p_s$, $p_f$, and $p_{\hat{f}}$.

\subsection{Estimating $p_s$}
\label{estimateps}
Every time the data collector executes a LDP algorithm, we will have $d$ noise $s_1, s_2, \cdots, s_d$ and $d$ frequency estimations $\hat{f}_1, \hat{f}_2, \cdots, \hat{f}_d$. $p_s$ is the probability distribution formed by the $d$ noise in a single execution trial. Since the LDP algorithm is a randomized algorithm, $s_i$ and  $\hat{f}_i$ across different execution trials are different, even if each user has the same item in different execution trials. For simplicity, we model $s_i$ and  $\hat{f}_i$ as random variables, where the randomness comes from the LDP algorithm and $i=1, 2, \cdots, d$. Moreover, we denote by $s_i^{(j)}$ and $\hat{f}_i^{(j)}$ the 
noise and estimated frequency for item $i$ in the $j$th execution trial, respectively. $s_i^{(1)}, s_i^{(2)}, \cdots$ are random samples from the random variable $s_i$, while $\hat{f}_i^{(1)}, \hat{f}_i^{(2)}, \cdots$ are random samples from the random variable $\hat{f}_i$. 

We note that executing a LDP algorithm multiple trials may compromise user privacy because the noise may be canceled out via aggregating results in multiple execution trials. Memoization~\cite{Erlingsson:2014,ding2017collecting} was proposed to preserve privacy when the data collector repeatedly executes the LDP algorithm to collect data. In memoization, the client side pre-computes each user's perturbed and encoded item and responds to the data collector with the pre-computed value in different execution trials. If the memoization is adopted and users' items do not change in different execution trials, then the noise $s_i$ and estimated item frequency $\hat{f}_i$  are the same in different execution trials. However, to illustrate the randomness of $s_i$ and  $\hat{f}_i$, we assume the memoization is not adopted.

\myparatight{Probability distribution of the noise for an item across multiple execution trials}
State-of-the-art LDP algorithms~\cite{Erlingsson:2014,kairouz2016discrete,Ninghui:2017} satisfy pure local differential privacy~\cite{Ninghui:2017}. Therefore, we will focus on pure LDP algorithms. Pure LDP algorithms estimate $\hat{f}_i$ using Equation~\ref{agg}. The variable $\mathbbm{1}_{Support(t_u)}(i)$ in the Equation~\ref{agg} is a binary random variable, due to the randomness of the LDP algorithm. Therefore,  $\hat{f}_i$ is essentially a sum of $n$ binary random variables (with some normalization). According to the Central Limit Theorem~\cite{degroot2011probability}, $\hat{f}_i^{(1)}, \hat{f}_i^{(2)}, \cdots$ obtained in multiple execution trials of the LDP algorithm approximately form a Gaussian distribution. Moreover, the expectation of $\hat{f}_i$ across multiple execution trials of the LDP algorithm is $f_i$, the true frequency of item $i$; and the variance of $\hat{f}_i$ is approximated as $\frac{nq^*(1-q^*)}{(p^*-q^*)^2}$~\cite{Ninghui:2017}. 

Since $s_i^{(j)}=\hat{f}_i^{(j)} - f_i$, we obtain that $s_i^{(1)}, s_i^{(2)}, \cdots$ obtained in multiple execution trials also form a Gaussian distribution. Moreover, we have the expectation and variance of $s_i$ as follows:
\begin{align}
E(s_i) &= E(\hat{f}_i) - f_i=0 \\
Var(s_i) &= Var(\hat{f}_i) = \frac{nq^*(1-q^*)}{(p^*-q^*)^2}
\end{align}
The expectation and variance do not depend on the item index $i$. 
Therefore, the random noise $s_i$ for each item approximately follows the same Gaussian distribution. In Figure~\ref{noisematrix}, we put the noise of each item in multiple execution trials in a matrix, where the $i$th row corresponds to the noise of item $i$ in different execution trials and the $j$th column corresponds to noise of the $d$ items in the $j$th execution trial. For pure LDP algorithms, the numbers in each row of the matrix are sampled from the same Gaussian distribution.

\begin{figure}[!t]
\centering
\subfloat{\includegraphics[width= 0.35\textwidth]{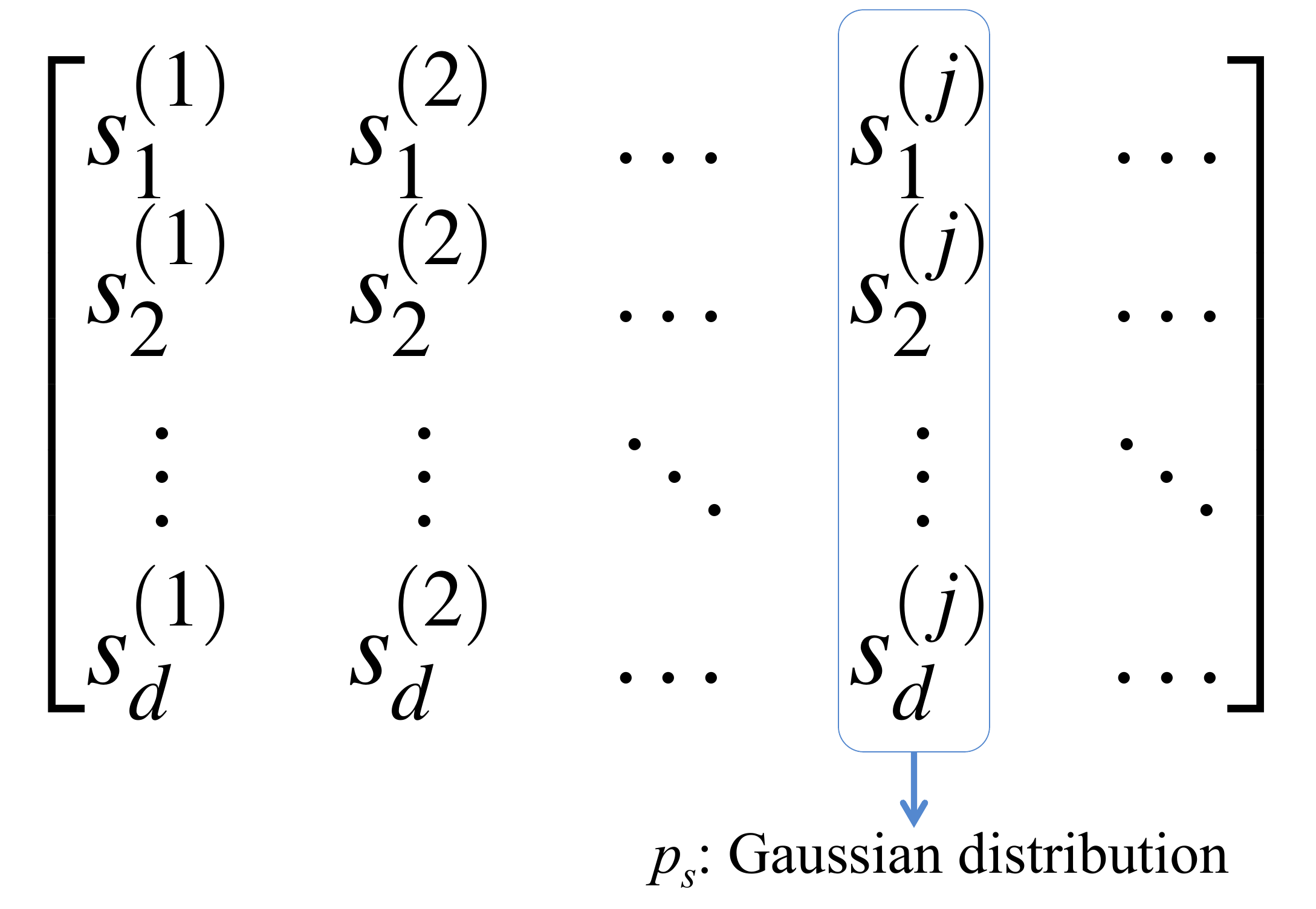}}
\caption{Matrix of noise in multiple execution trials.}
\vspace{-4mm}
\label{noisematrix}
\end{figure}

\myparatight{Probability distribution $p_s$ formed by the noise for all items in a single execution trial} In our formulation of $Calibrate$,  $p_s$ models the probability distribution of the noise of the $d$ items in a single execution trial. Specifically, for the $j$th execution trial, $p_s$ models the probability distribution formed by the noise $s_1^{(j)}, s_2^{(j)}, \cdots, s_d^{(j)}$, i.e., the $j$th column of the matrix illustrated in Figure~\ref{noisematrix}. Since all numbers in the matrix are sampled from the same Gaussian distribution, each column of the matrix follows the same Gaussian distribution. Therefore, $p_s$ is a Gaussian distribution. Moreover, the Gaussian distribution has a mean 0 and variance $\frac{nq^*(1-q^*)}{(p^*-q^*)^2}$. The data collector can compute the variance using $n$, the number of users, and $\epsilon$, the predefined privacy budget ($p^*$ and/or $q^*$ depends on $\epsilon$).
In our experiments, we will show results to empirically verify that $p_s$ is a Gaussian distribution with the known mean and variance.

We note that for LDP algorithms that are not pure LDP, the noise distribution $p_s$ does not necessarily follow a Gaussian distribution. For instance, RAPPOR~\cite{Erlingsson:2014} with Bloom filters is not a pure LDP algorithm (basic RAPPOR without Bloom filters is pure LDP). We empirically found that $p_s$ does not follow a Gaussian distribution. Moreover, $p_s$ depends on the items' true frequencies. For such LDP algorithms with data-dependent noise distribution, our $Calibrate$ is not applicable. However, this limitation is minor since we aim to advance state-of-the-art LDP algorithms, which satisfy pure LDP.

\subsection{Estimating $p_f$ and $p_{\hat{f}}$}
\label{estimatepf}

We assume the data collector knows the distribution family that $p_f$ belongs to. For instance, many real-world phenomena--such as video popularity, webpage click frequency, word frequency in documents, node degrees in social networks--follow power-law distributions~\cite{cha2007tube,Clauset09}; height of human follows a Gaussian distribution~\cite{a2009height}. 
 Moreover, in a hybrid local differential privacy setting~\cite{Blender:2017}, some \emph{opt-in} users trust the data collector and share their true items with the data collector. The data collector could leverage such {opt-in} users to roughly estimate the distribution family that $p_f$ belongs to. Moreover, we design a \emph{mean-variance method} to estimate $p_f$. 
 Based on $p_s$ and $p_f$, we further estimate $p_{\hat{f}}$. 

\myparatight{Estimating $p_f$} Suppose the distribution family of $p_f$ is parameterized by a set of parameters $\Theta$, which we denote as $p_f(f|\Theta)$. For instance, the following shows the popular power-law distribution family: 
\begin{align}
\label{powerlaw}
\text{\bf Power-law: } &p_f(f=k|\alpha) \propto k^{-\alpha} 
\end{align}
We discuss two methods, \emph{maximum likelihood estimation method} and \emph{mean-variance method}, to estimate the parameters $\Theta$. The maximum likelihood estimation method is a standard technique in statistics, while the mean-variance method is proposed by us. The maximum likelihood estimation method is applicable to any distribution family, while our mean-variance method can estimate the parameters of distributions that have at most two parameters (many widely used distributions have at most two parameters) more efficiently than the maximum likelihood estimation method. 

{\bf Maximum likelihood estimation method.}  For each frequency estimation $\hat{f}_i$, we can compute its probability as $\sum_{k}p_{f}(f=k|\Theta)p_{s}(s$ $=\hat{f}_i-k)$, which is a function of the parameters $\Theta$.  In maximum likelihood estimation, we aim to find the parameters $\Theta$ that maximize the product of the probabilities of the $d$ frequency estimations. Specifically, we obtain $\Theta$ via solving the following optimization problem:
\begin{align}
\label{mle}
\Theta=\argmax_{\theta} \sum_i log(\sum_{k}p_{f}(f=k|\theta)p_{s}(s=\hat{f}_i-k)).
\end{align}
We can use the \emph{gradient descent} method to solve the optimization problem iteratively. 
Specifically,  $\Theta$ is initialized to be some random value; 
in each iteration, we compute the gradient of the objective function with the current $\Theta$, 
and we move $\Theta$ along the gradient with a certain step size. Due to limited space, we omit the details.


{\bf Mean-variance method.} We propose a mean-variance method to estimate the parameters $\Theta$ when $p_{f}$ has at most two parameters. For instance, popular distributions--such as power-law distributions, Gaussian distributions, Laplacian distributions, and Poison distributions--have one or two parameters. For such distributions, our mean-variance method is much more efficient than the maximum likelihood estimation method because solving the optimization problem in Equation~\ref{mle} involves sum over $k$. 
Since $\hat{f} = f + s$, we have:
\begin{align}
\label{exp1}
E(\hat{f}) &= E(f) + E(s) \\
\label{var}
Var(\hat{f}) &= Var(f) + Var(s),
\end{align}
where $E$ and $Var$ represent expectation and variance, respectively.
Given the $d$ frequency estimations $\hat{f}_{1}, \hat{f}_{2}, \cdots, \hat{f}_{d}$, we can estimate $E(\hat{f})$ and $Var(\hat{f})$. Specifically, we have:
\begin{align}
E(\hat{f})&=\frac{1}{d}\sum_i \hat{f}_{i} \\
Var(\hat{f})&=\frac{1}{d}\sum_i (\hat{f}_{i} - E(\hat{f}))^2
\end{align}

Moreover, $E(s)$ and $Var(s)$ are known since $s$ follows a Gaussian distribution, according to our analysis in Section~\ref{estimateps}. $E(f)$ and $Var(f)$ are functions of $\Theta$. Therefore, Equation~\ref{exp1} and~\ref{var} define a system of equations for the parameters $\Theta$. Via solving the system, we obtain the parameters $\Theta$.  When the distribution $p_{f}$ has one parameter, we can simply use Equation~\ref{exp1} to solve the parameter.

%
%

\myparatight{Estimating $p_{\hat{f}}$} Since the three random variables $s$, $f$, and $\hat{f}$ are correlated as $\hat{f} = f + s$, the three probability distributions $p_s$, $p_f$, and $p_{\hat{f}}$ have the following relationship:
\begin{align}
 p_{\hat{f}}(\hat{f} = \hat{k})=\sum_k p_f(f=k)p_s(s=\hat{k}-k).
 \end{align}
Therefore, given the probability distributions $p_s$ and $p_f$, we can estimate $p_{\hat{f}}$.



\section{Evaluation}
\label{exp}


\subsection{Experimental Setup}

\subsubsection{Two Real-world Datasets}  

\begin{figure}[!t]
\vspace{-2mm}
\centering
\subfloat[Kosarak]{\includegraphics[width=0.25 \textwidth]{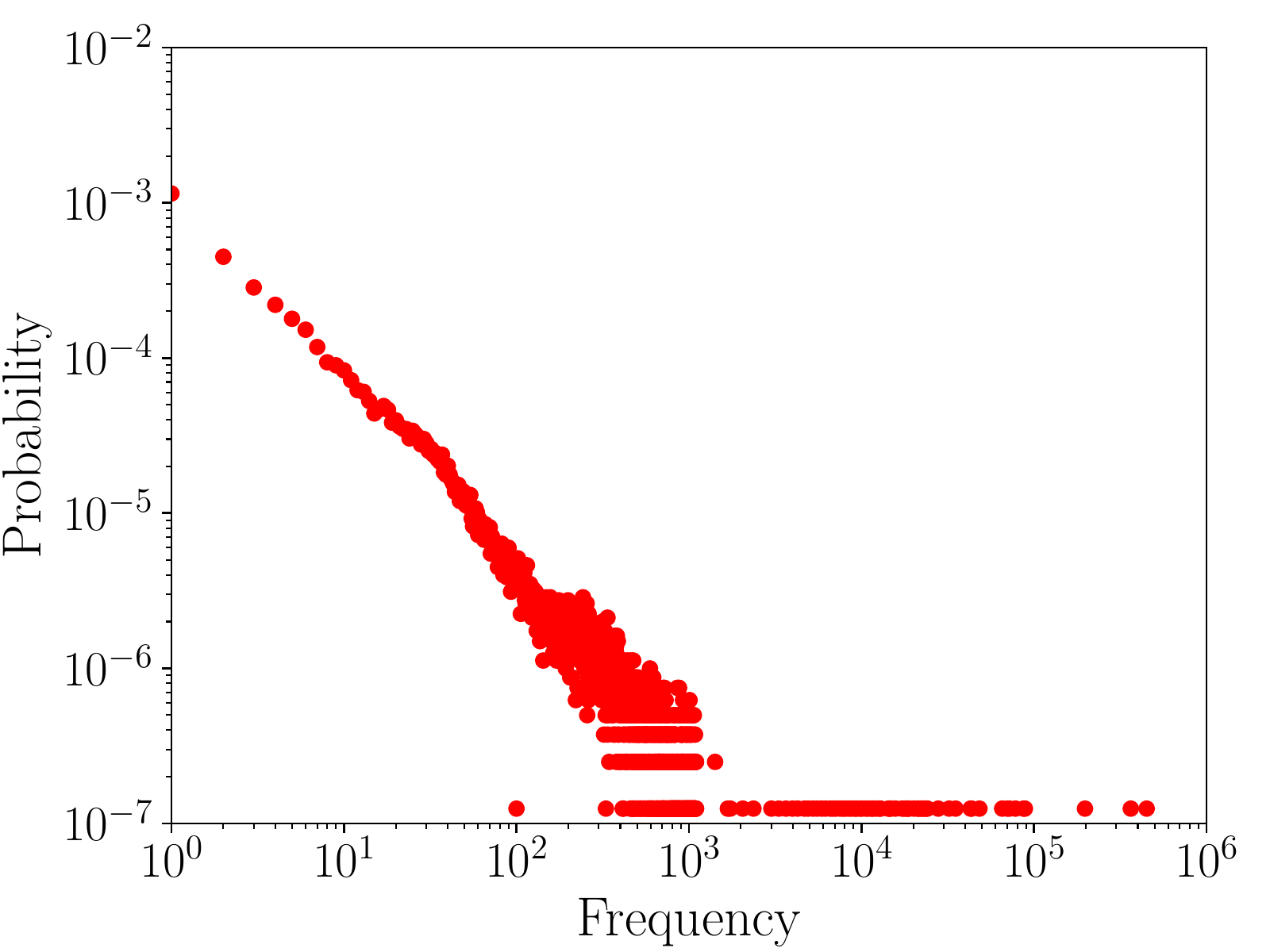}\label{kosarak}}
\subfloat[Retail Market Basket]{\includegraphics[width=0.25 \textwidth]{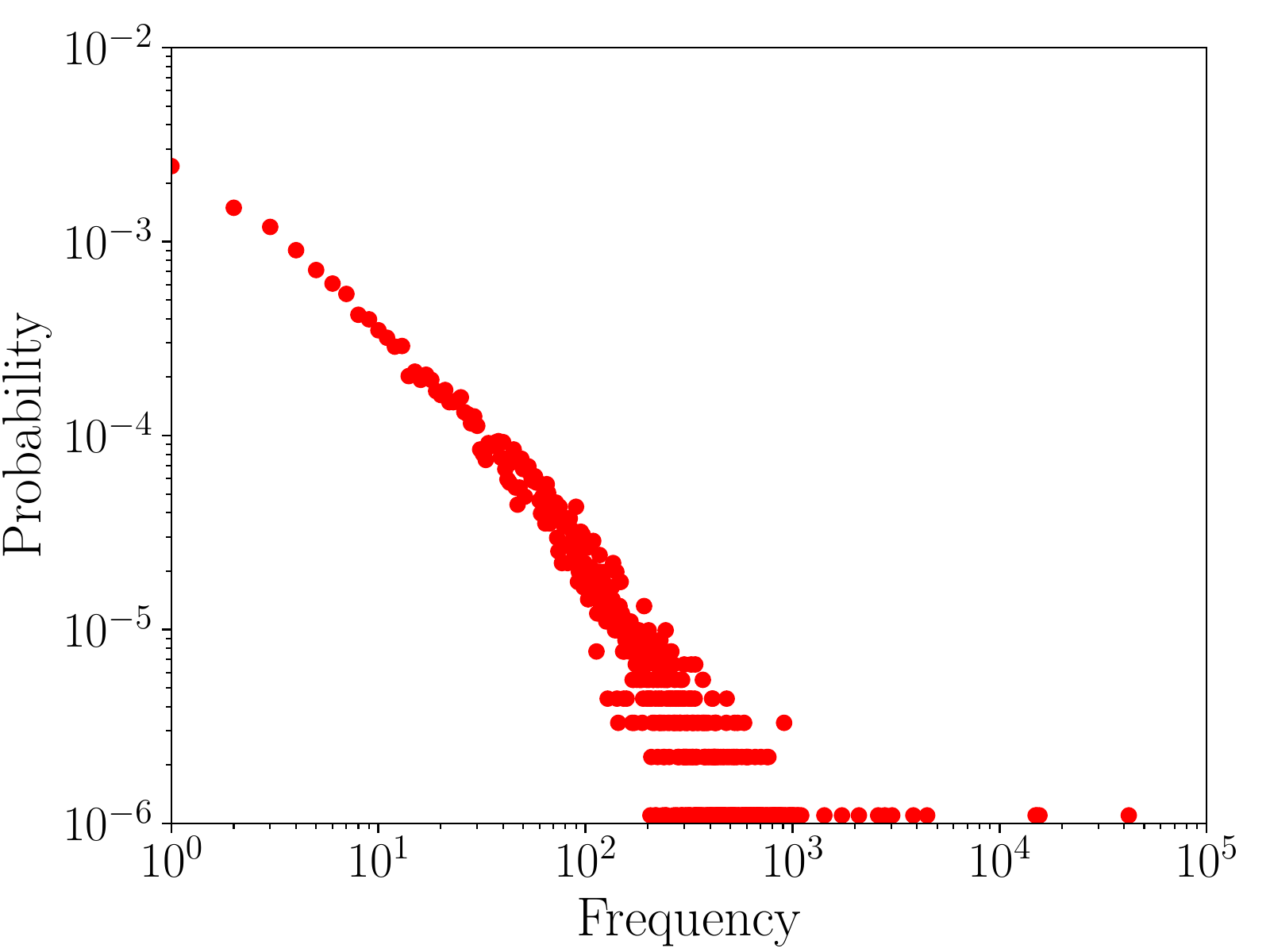}\label{retail}}
\caption{True item frequencies in the (a) Kosarak dataset and (b) Retail Market Basket dataset follow power-law distributions.}
\vspace{-2mm}
\end{figure}

Like previous studies~\cite{QinHeavyHitterCCS16,Ninghui:2017}, we use the {Kosarak}~\cite{Kosarakdata} and Retail Market Basket~\cite{brijs1999using} datasets. In the Kosarak dataset, items are webpages and an item's frequency is the number of user clicks of the corresponding webpage. Overall, there are 41,270 webpages and their total number of frequencies is 8,019,015. In Retail Market Basket dataset, items represent products in supermarket store and the frequency of an item is the sale of the product. In summary, there are 16,470 products and the total frequencies is 908,576. Like a previous study~\cite{Ninghui:2017}, we assume each user generates one webpage click or buys one product, i.e., each user has one item. 

Like many real-world phenomena, the number of clicks (i.e., frequencies) of the webpages or the sale of the product roughly follows a power-law distribution. Specifically, Figure~\ref{kosarak} and~\ref{retail} respectively show the probability distribution of the item frequencies in the two datasets, where the x-axis is item frequency and y-axis is the fraction of items that have a given frequency. The curve is a typical power-law distribution observed in real world under log-log scale~\cite{cha2007tube,Clauset09}: a large fraction of items have small frequencies, a very small fraction of items have large frequencies (i.e., the known \emph{long-tail} property), and the curve is close to a line when the item frequency is smaller than a certain threshold. 


\begin{figure}[!t]
\centering
{\includegraphics[width=0.35 \textwidth]{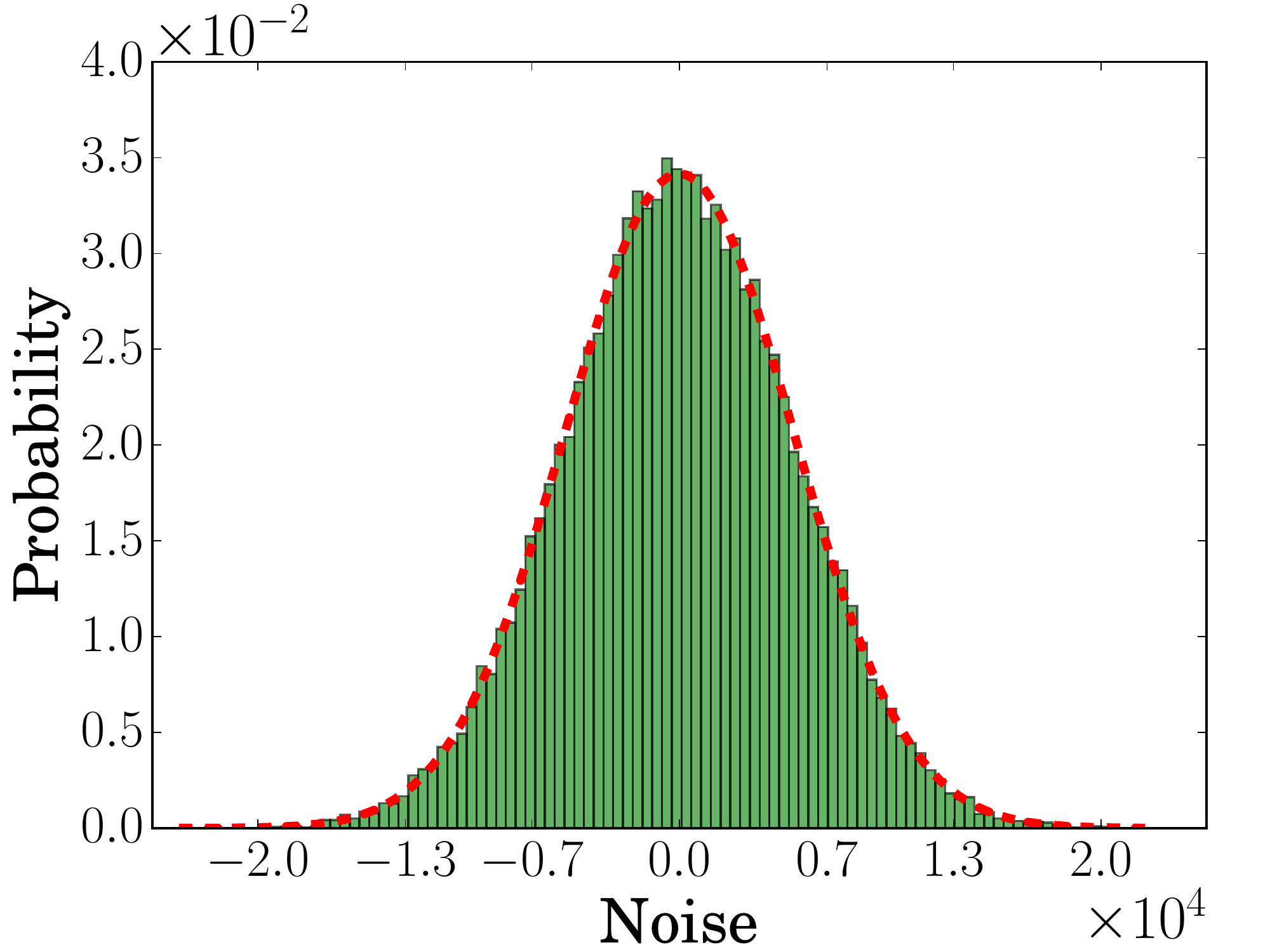}}
\caption{Empirical noise distribution.}
\label{noisedistribution}
\vspace{-4mm}
\end{figure}

\begin{figure*}[!t]
\centering
\subfloat[Precision]{\includegraphics[width=0.33 \textwidth]{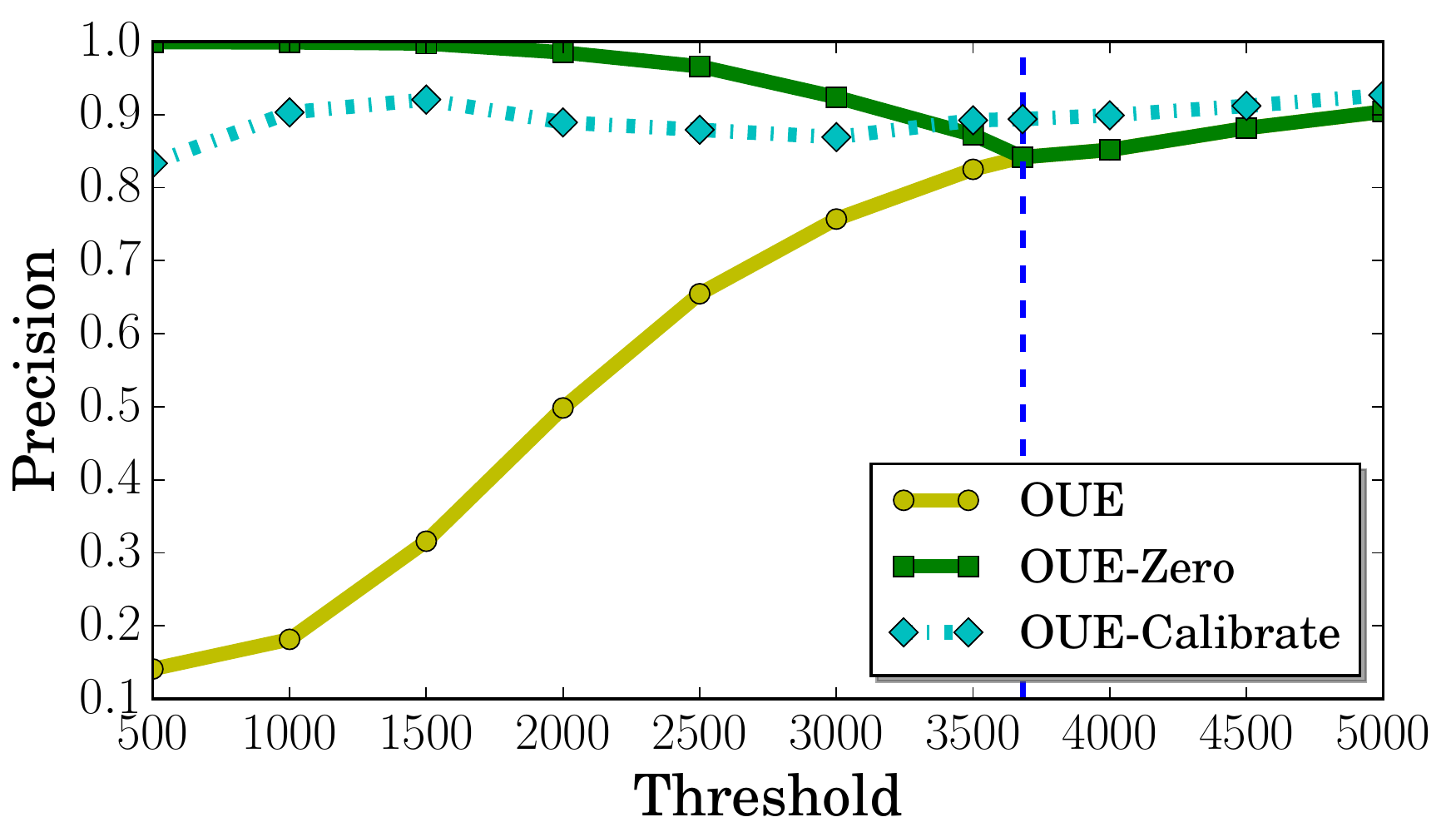}}
\subfloat[Recall]{\includegraphics[width=0.33 \textwidth]{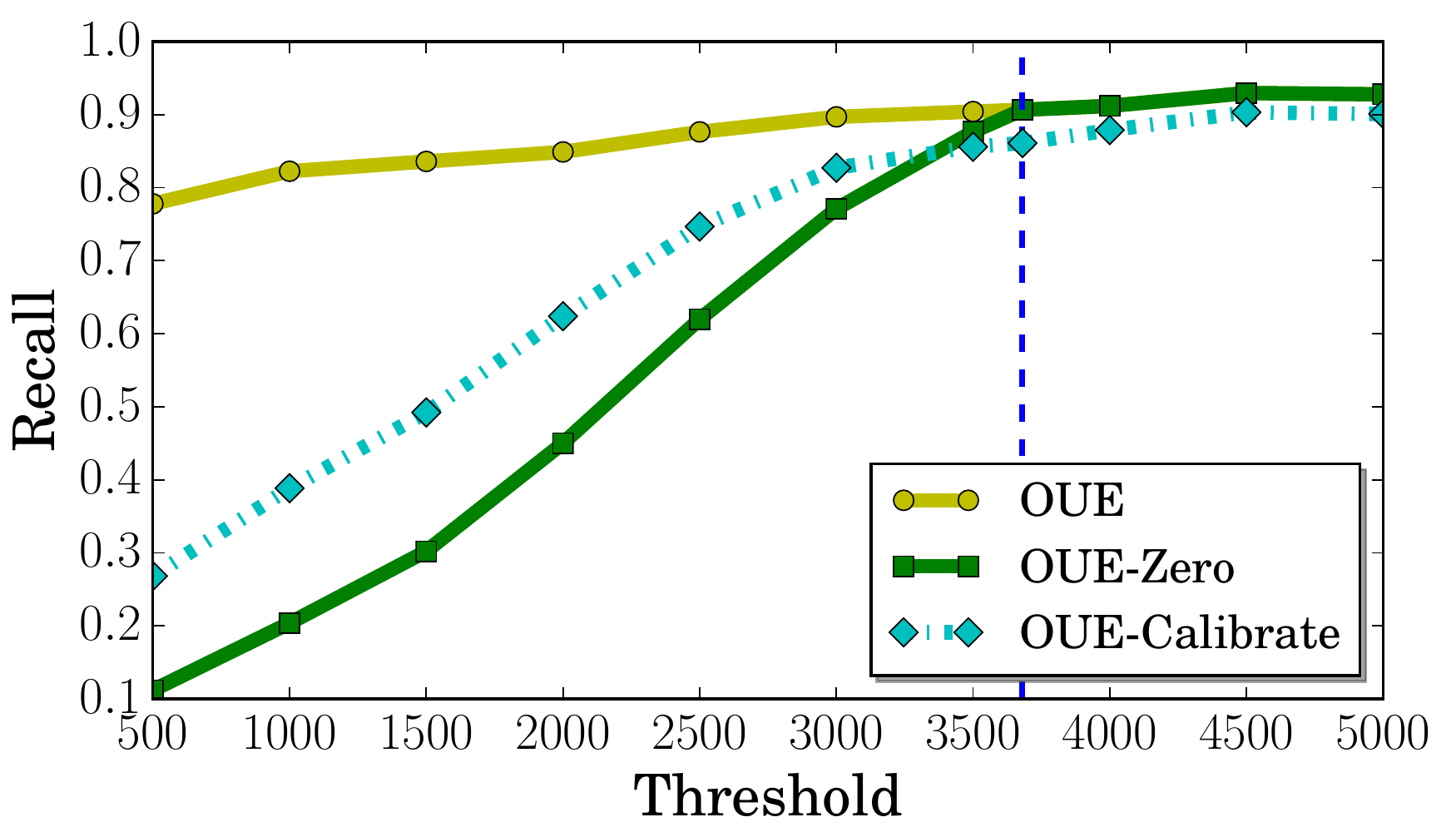}}
\subfloat[F-Score]{\includegraphics[width=0.33 \textwidth]{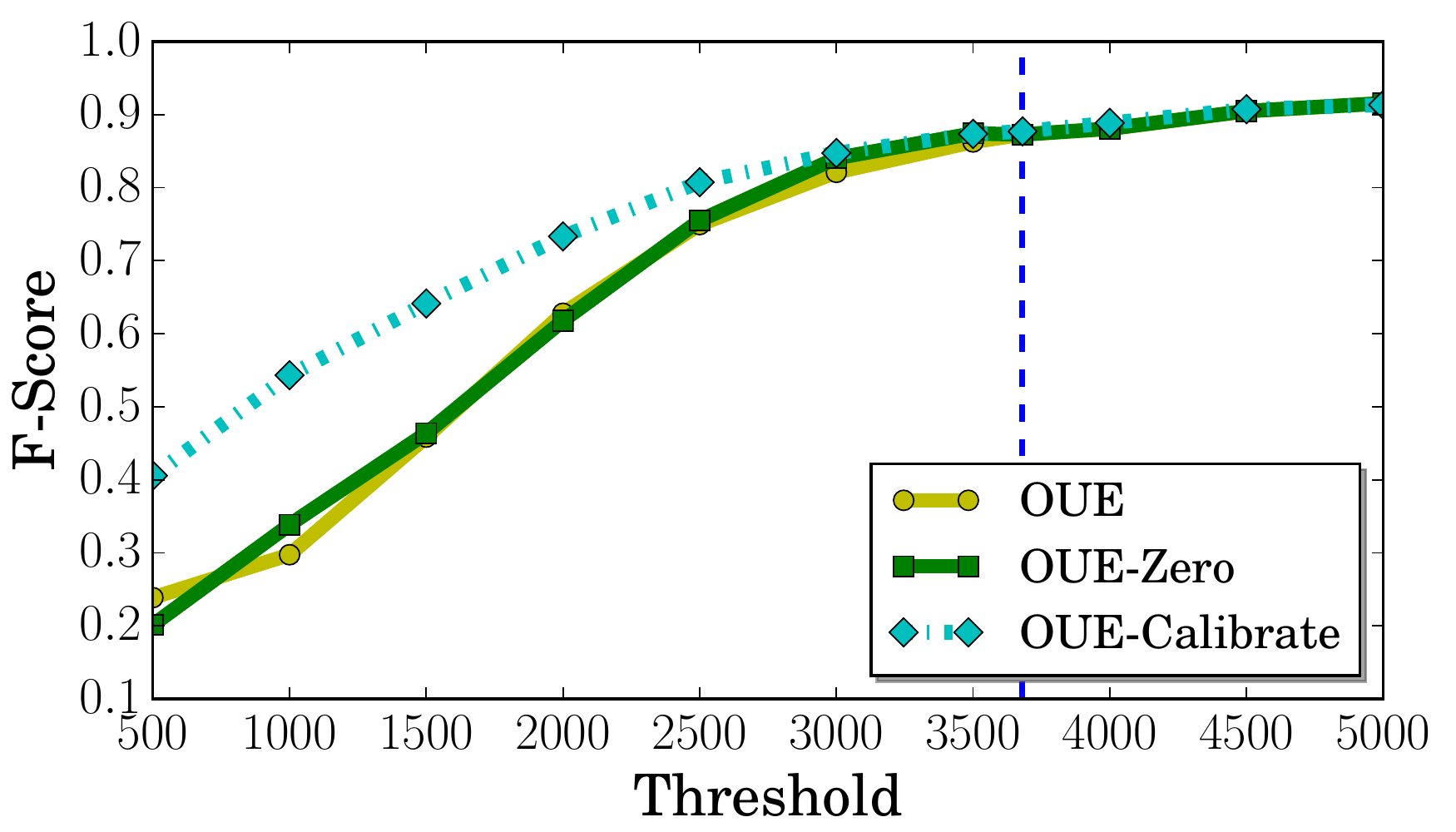}}

\caption{Precision, Recall, and F-Score of heavy hitter identification vs. threshold in Kosarak. The vertical line is the significance threshold.}
\label{estimationkosarak}
\end{figure*}

\begin{figure*}[!t]
\centering
\subfloat[Precision]{\includegraphics[width=0.33 \textwidth]{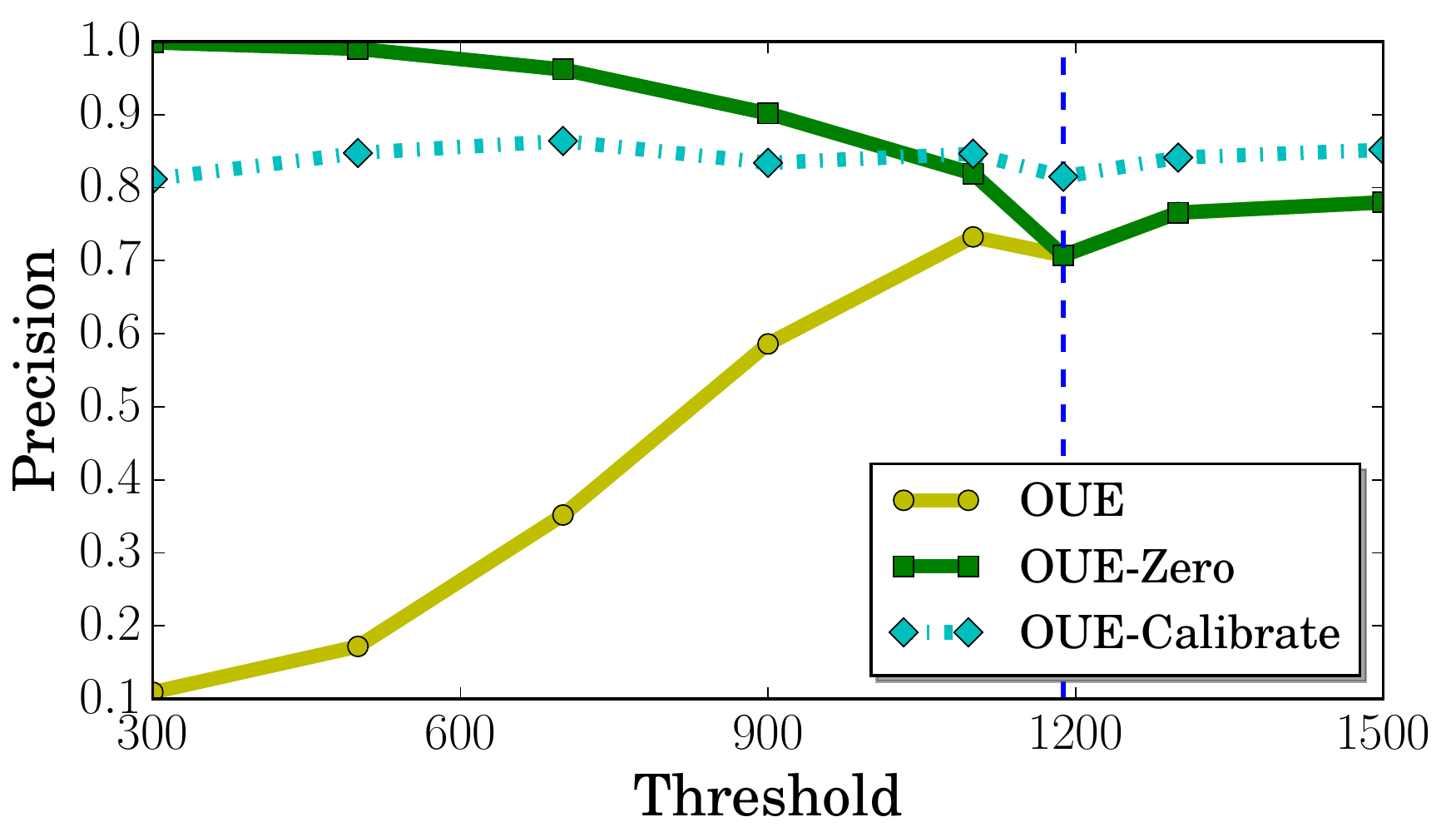}}
\subfloat[Recall]{\includegraphics[width=0.33 \textwidth]{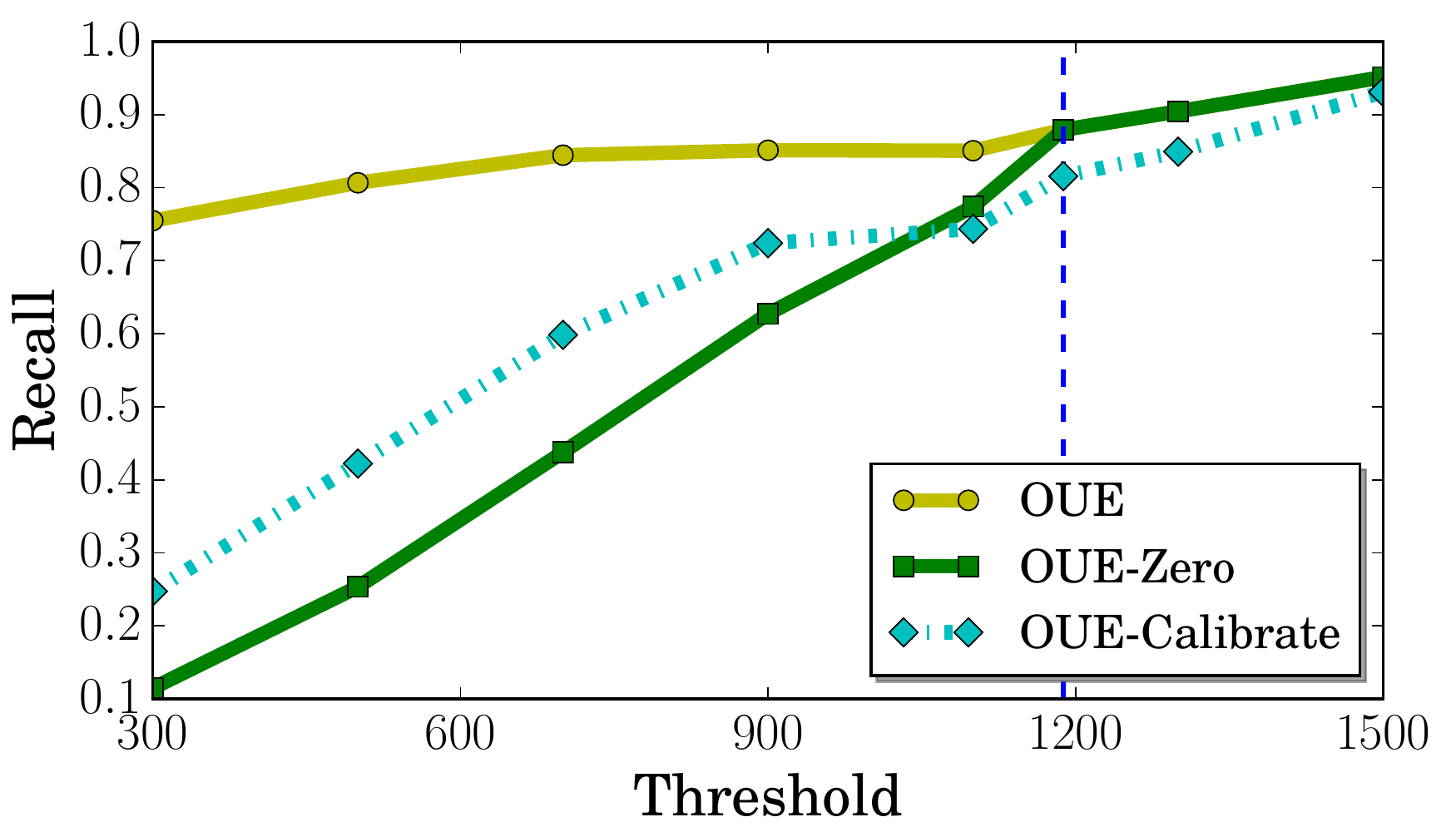}}
\subfloat[F-Score]{\includegraphics[width=0.33 \textwidth]{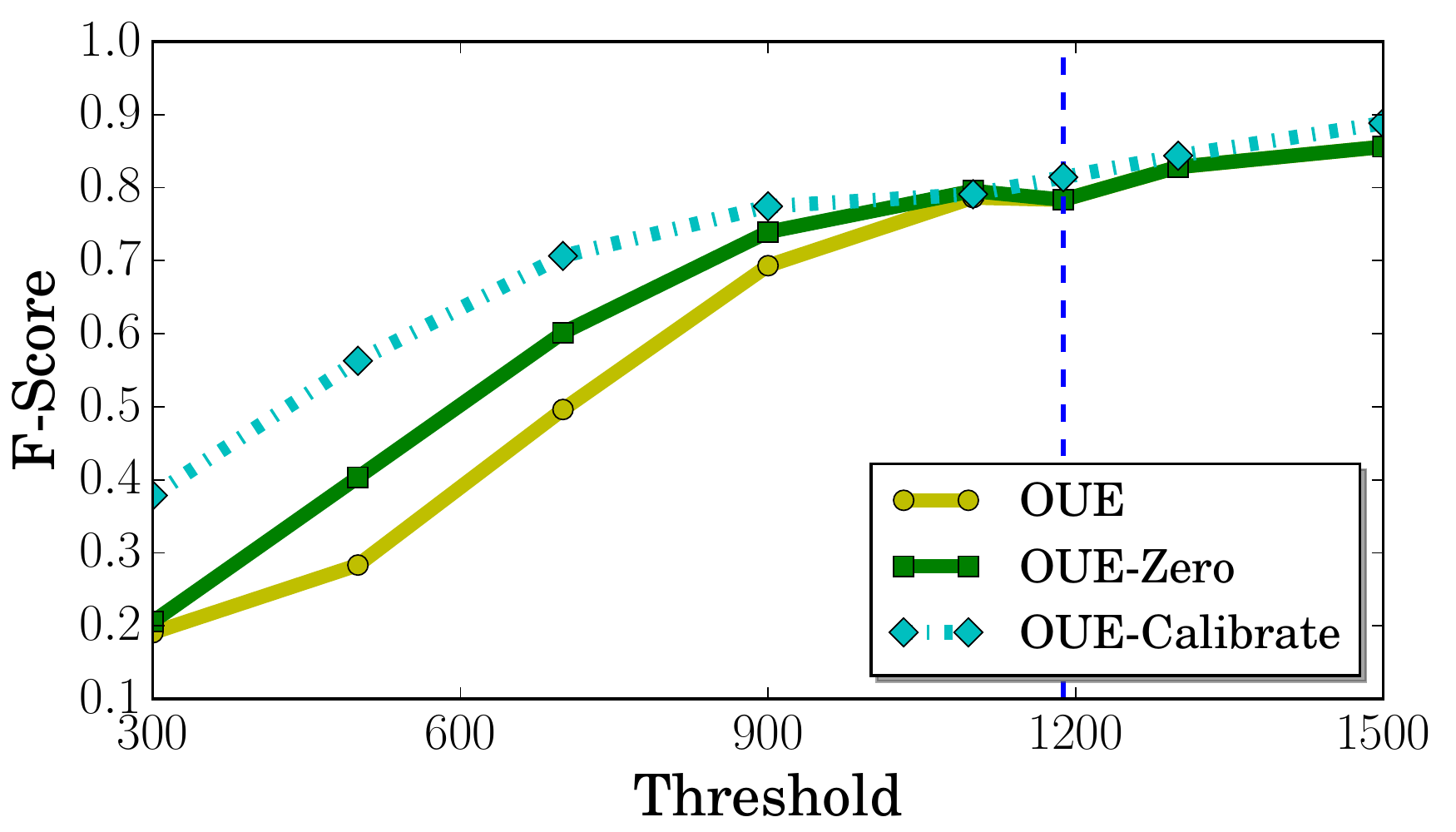}}

\caption{Precision, Recall, and F-Score of heavy hitter identification vs. threshold in  Retail Market Basket. The vertical line is the significance threshold.}
\label{estimationretail}
\end{figure*}

\subsubsection{Compared Methods} We compare the following LDP algorithms:


\myparatight{Optimized Unary Encoding (OUE)~\cite{Ninghui:2017}} OUE is an optimized version of basic RAPPOR~\cite{Erlingsson:2014} and achieves state-of-the-art performance. Please refer to Section~\ref{relatedwork} for the $Encode$, $Perturb$, and $Aggregate$ steps of OUE.  

\myparatight{OUE-Zero} The high estimated item frequencies are statistically more reliable, as the probability that they are generated from low frequencies is low. However, the estimated low item frequencies may be less reliable because of the noise in the LDP process. In existing methods~\cite{Erlingsson:2014,Ninghui:2017}, frequencies that are smaller than a \emph{significance threshold} are unreliable and a data collector can discard them. Specifically, the significance threshold is defined as follows:
\begin{align}
\text{\bf Significance threshold = } \phi^{-1}(1-\frac{\beta}{d})\sqrt{Var},
 \end{align}
where $\phi^{-1}$ is the inverse of the
cumulative density function of standard Gaussian distribution, $d$ is the number of items, $Var$ is the estimation variance of the method, and we set $\beta=0.05$ as suggested by prior work~\cite{Ninghui:2017}. Therefore, we also evaluate OUE-Zero, which sets the estimated frequencies that are smaller than the significance threshold to be zero.

\begin{figure}[t]
\vspace{-2mm}
\centering
\subfloat[Kosarak]{\includegraphics[width=0.25 \textwidth]{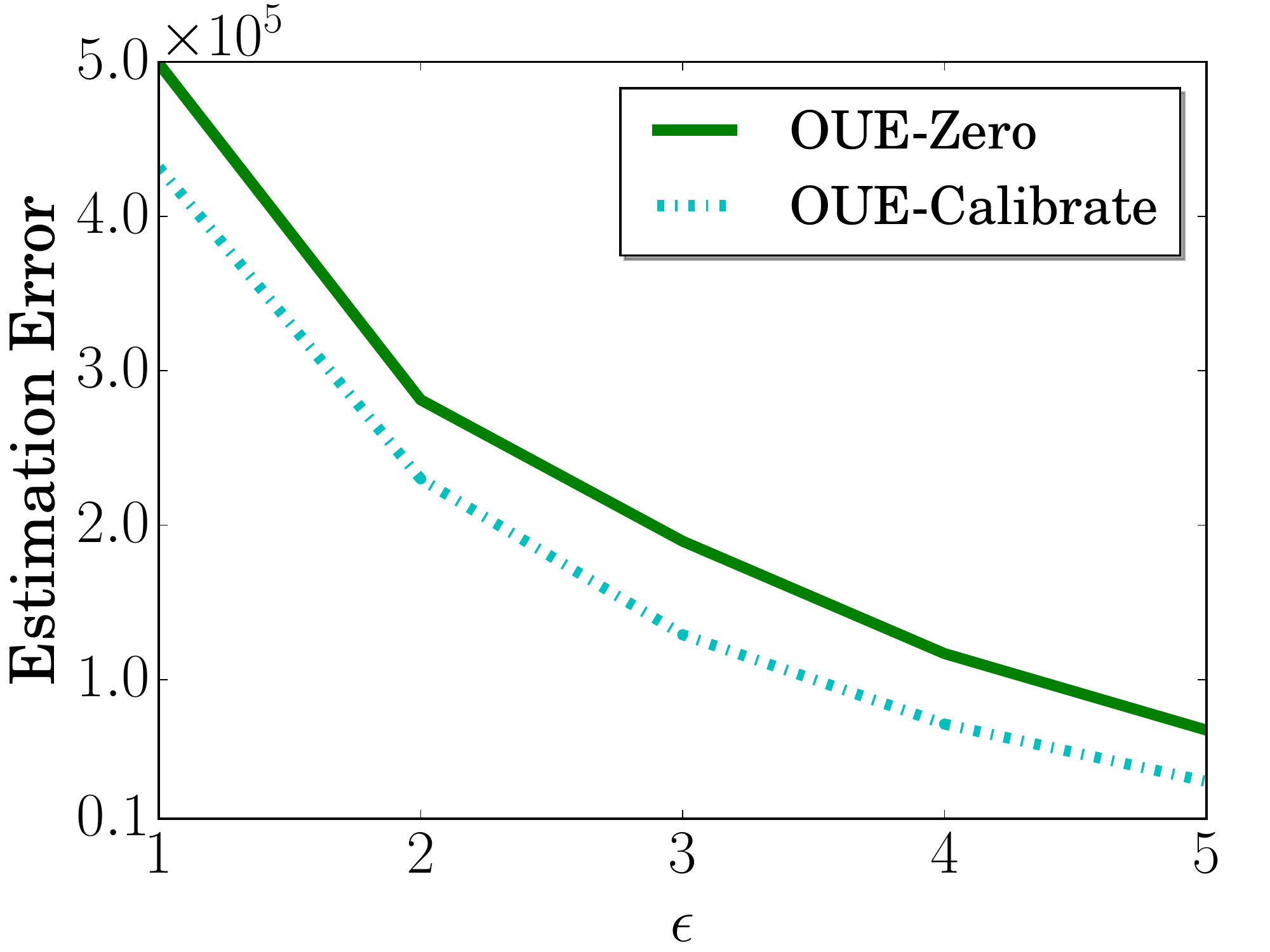}}
\subfloat[Retail Market Basket]{\includegraphics[width=0.25 \textwidth]{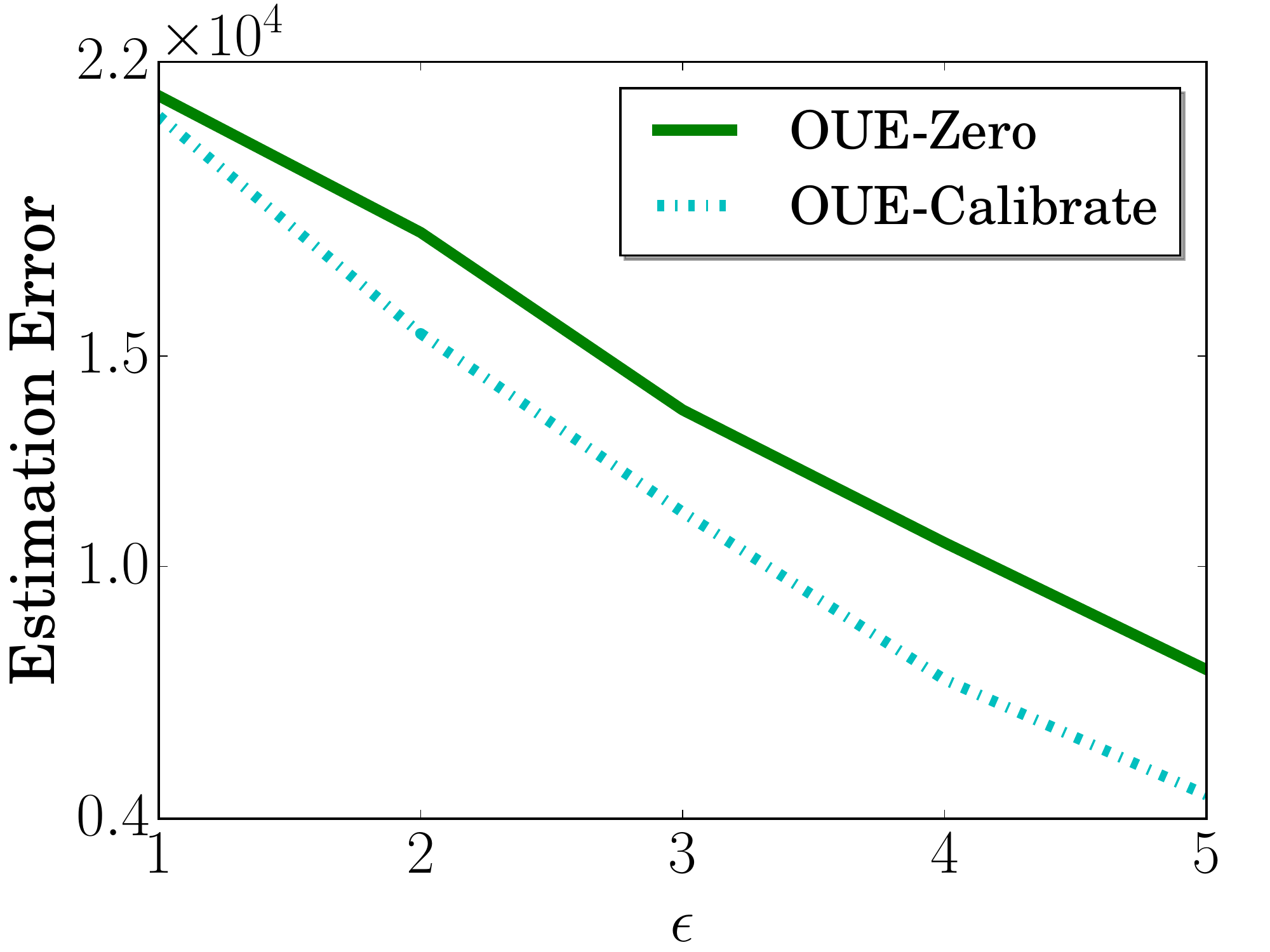}}
\caption{Estimation errors for frequency estimation on the two datasets as a function of the privacy budget $\epsilon$. Both OUE-Zero and OUE-$Calibrate$ have orders of magnitude of smaller estimation errors than OUE. Therefore,
to better contrast the difference between OUE-Zero and OUE-$Calibrate$, we omit the results of OUE.}
\label{privacybudget}
\end{figure}

\myparatight{OUE-Calibrate} We append $Calibrate$ to OUE.

%
%
We obtained the publicly available implementation of OUE from its authors~\cite{Ninghui:2017}, and we implemented our $Calibrate$ step in Python. For each experimental setting, we repeat the experiments for dozens of times and compute the average performance, e.g., estimation error for frequency estimation and F-Score for heavy hitter identification.

\begin{figure*}[!t]
\vspace{-2mm}
\centering
\subfloat[Precision]{\includegraphics[width=0.33 \textwidth]{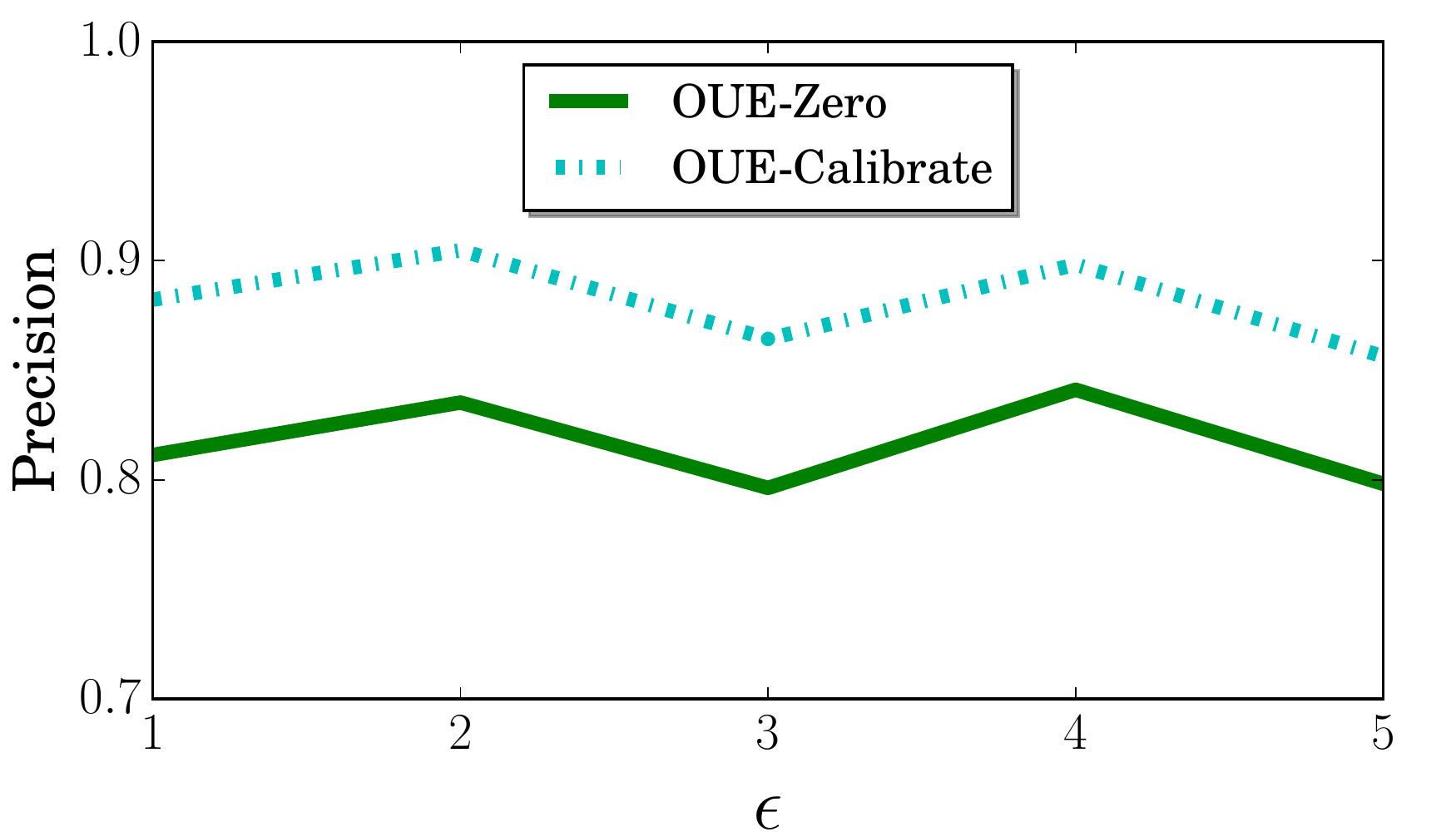}}
\subfloat[Recall]{\includegraphics[width=0.33 \textwidth]{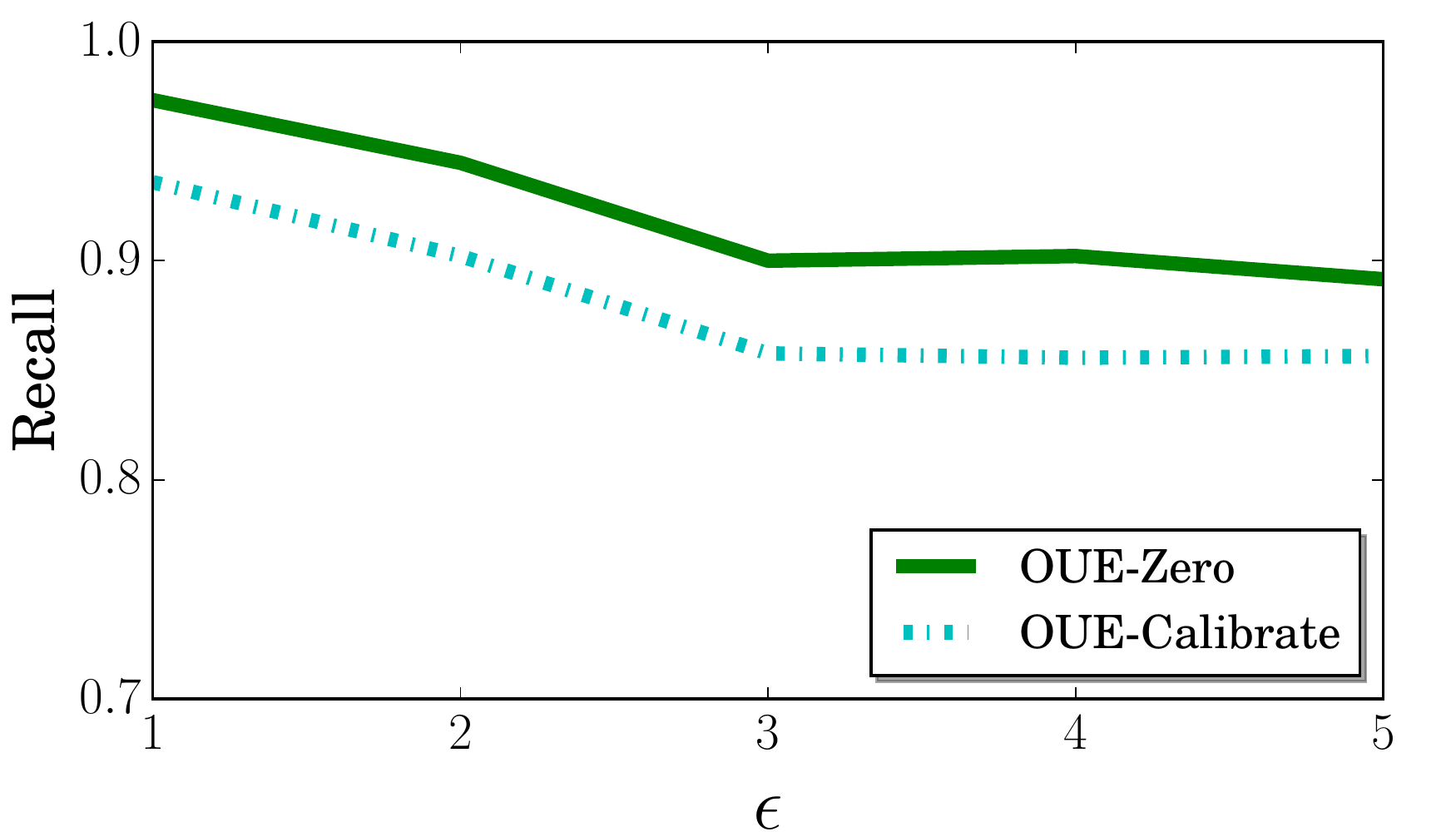}}
\subfloat[F-Score]{\includegraphics[width=0.33 \textwidth]{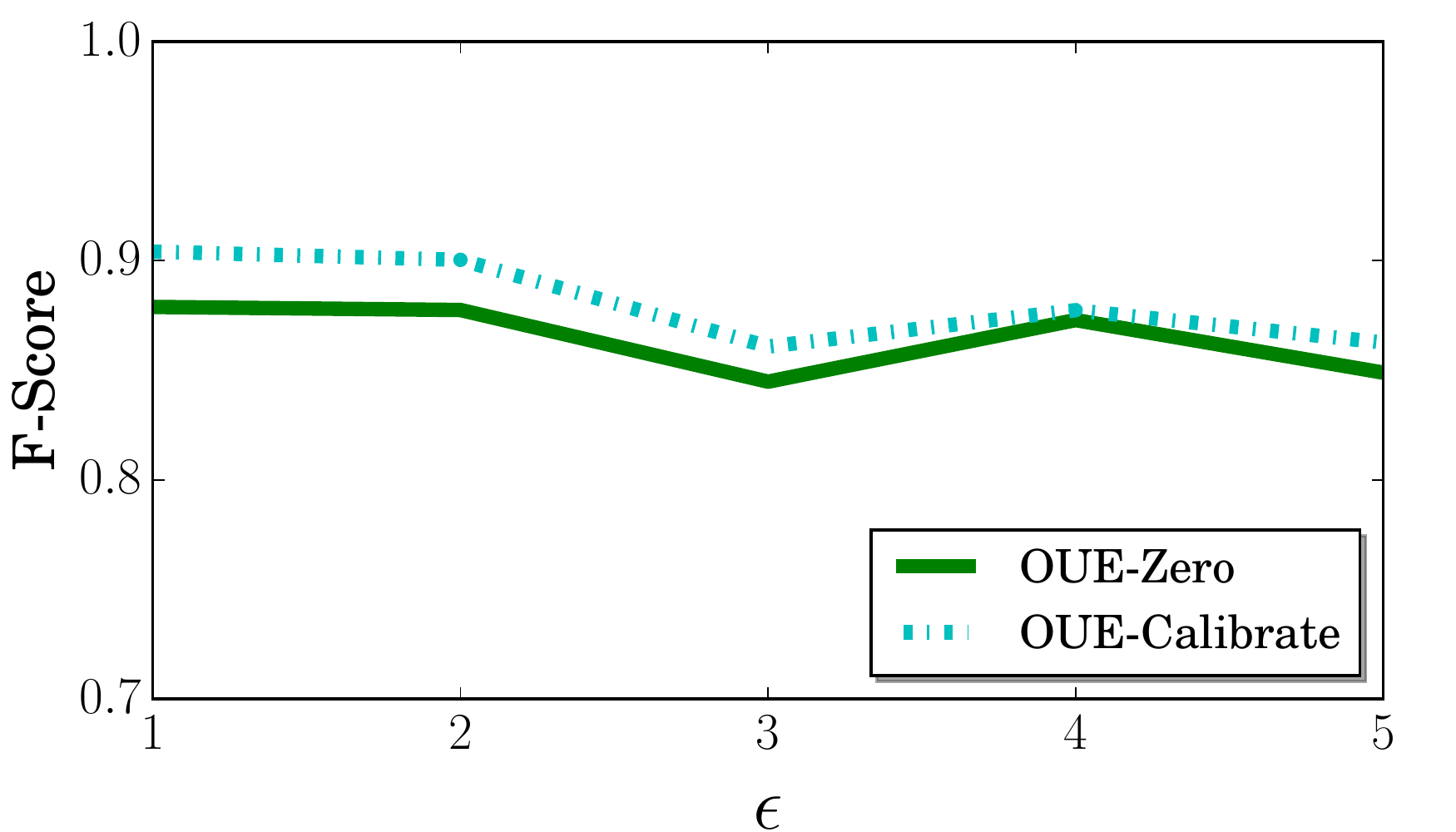}}

\caption{Precision, Recall, and F-Score of heavy hitter identification vs. privacy budget in Kosarak.}
\vspace{-4mm}
\label{ekosarak}
\end{figure*}

\begin{figure*}[!t]
\centering
\subfloat[Precision]{\includegraphics[width=0.33 \textwidth]{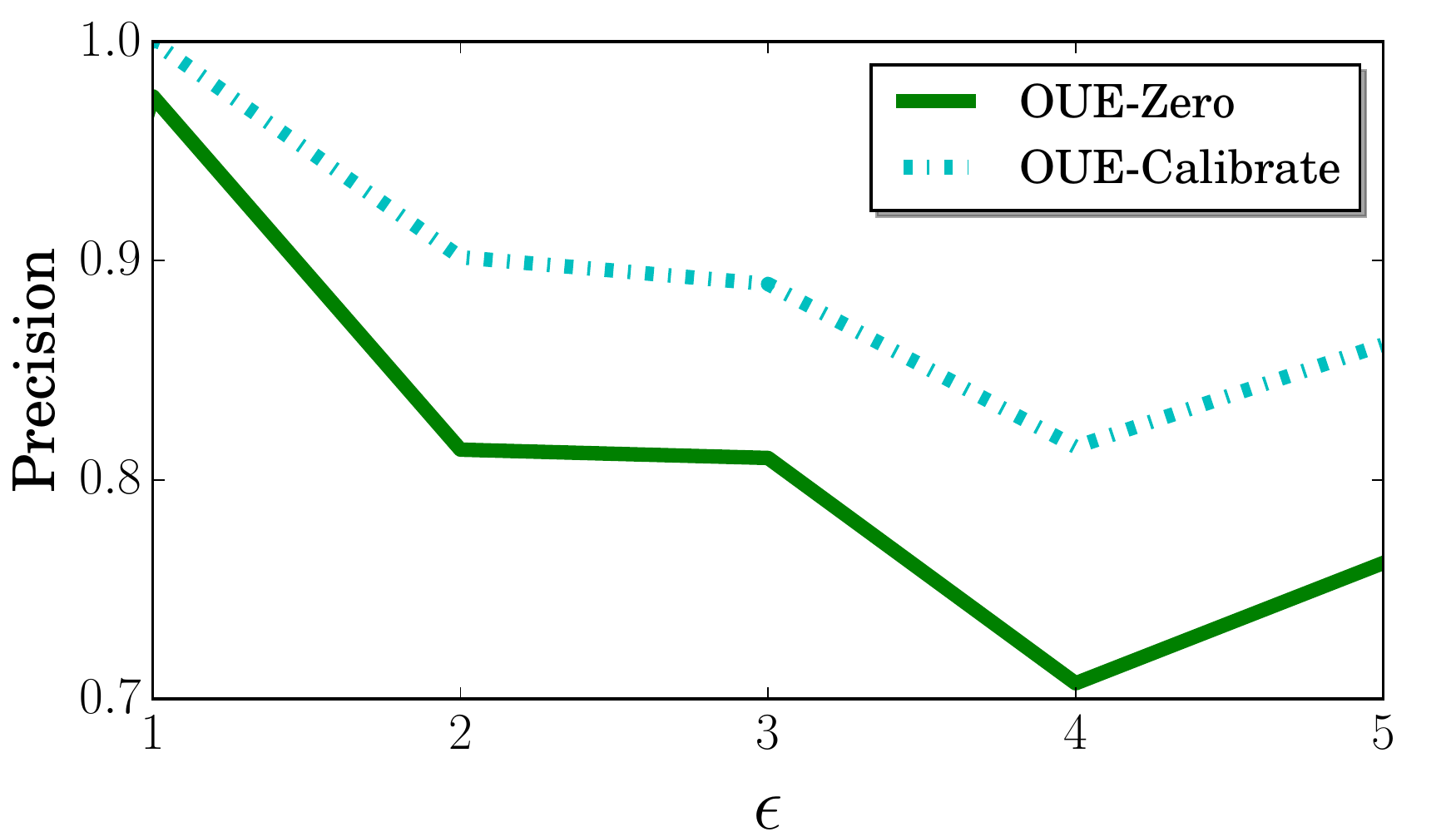}}
\subfloat[Recall]{\includegraphics[width=0.33 \textwidth]{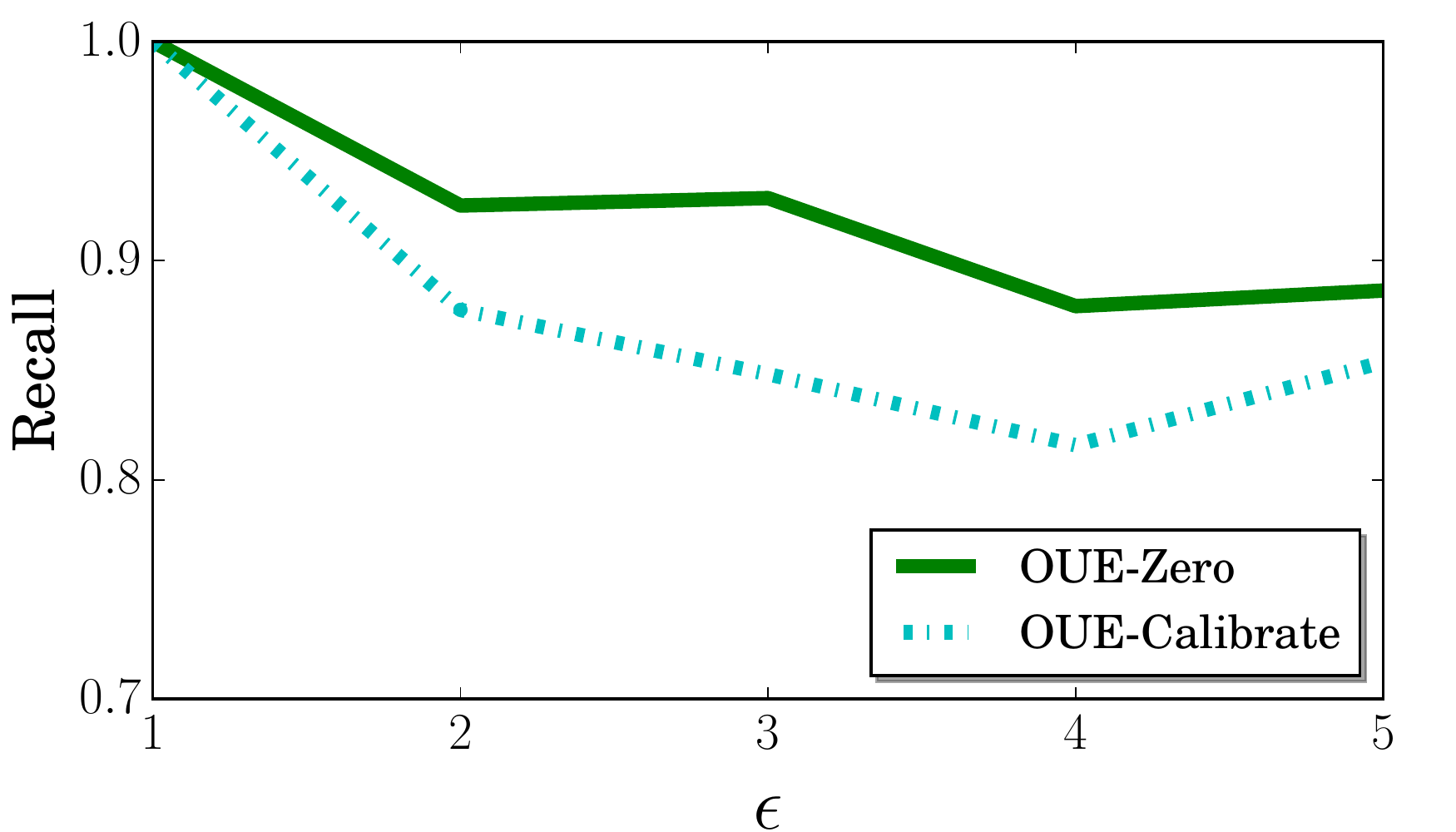}}
\subfloat[F-Score]{\includegraphics[width=0.33 \textwidth]{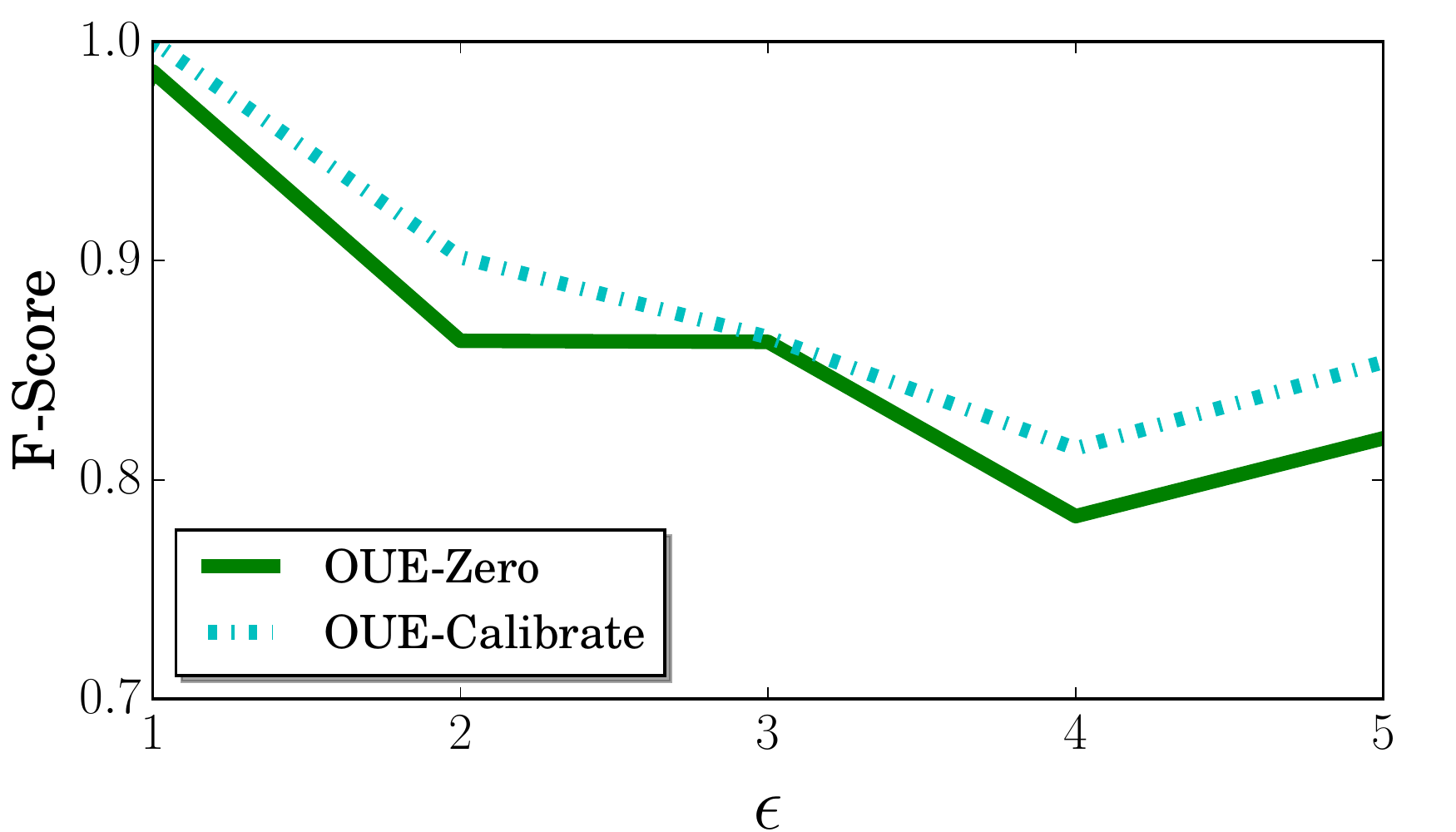}}

\caption{Precision, Recall, and F-Score of heavy hitter identification vs. privacy budget in  Retail Market Basket.}
\vspace{-4mm}
\label{eretail}
\end{figure*}

\subsection{Results}

\myparatight{Noise distribution $p_s$ is a Gaussian distribution} Figure~\ref{noisedistribution} shows an empirical noise distribution for OUE on the Kosarak dataset. Based on the  analysis in Section~\ref{estimateps}, we can theoretically compute that the noise distribution has a mean of 0 and a variance of 5434.  We fit the empirical noise distribution with a Gaussian distribution. The fitted Gaussian distribution has a mean of -3.2 and a variance of 5405, which are very close to the theoretically predicted mean and variance, respectively.

\myparatight{$Calibrate$ reduces estimation errors for frequency estimation} Figure~\ref{privacybudget} shows the estimation errors of OUE-Zero and OUE-$Calibrate$ on the two real-world datasets. We find that both OUE-Zero and OUE-$Calibrate$ outperform OUE by orders of magnitude, and thus we omit the results of OUE to better illustrate the difference between OUE-Zero and OUE-$Calibrate$. OUE-Zero outperforms OUE because OUE-Zero resets the frequencies that OUE is unconfident to be 0, while OUE-$Calibrate$ incorporates prior knowledge to recalibrate these frequencies. We also find that further resetting the frequencies smaller than the significance threshold in OUE-$Calibrate$ has a negligible impact on the estimation errors.  

OUE-$Calibrate$ outperforms OUE-Zero. This is consistent with our Theorem~\ref{optimality}, which shows that $Calibrate$ is the optimal post-processing method. Moreover, the performance gain is more significant as the privacy budget $\epsilon$ increases. For instance, on the Kosarak dataset, the relative improvements of OUE-$Calibrate$ upon OUE-Zero are 16\% and 97\% for $\epsilon=1$ and $\epsilon=5$, respectively. Likewise, on the Retail Market Basket dataset, the relative improvements of OUE-$Calibrate$ upon OUE-Zero are 2.4\% and 65\% for $\epsilon=1$ and $\epsilon=5$, respectively. 
This is because a larger privacy budget means smaller noise in the estimated frequencies of OUE. Subsequently,  $Calibrate$ can better estimate the probability distributions that model the prior knowledge from the estimated frequencies. As a result,  OUE-$Calibrate$ shows more improvements over  OUE-Zero.

\myparatight{$Calibrate$ improves heavy hitter identification} 
Figure~\ref{estimationkosarak} and~\ref{estimationretail} show the Precision, Recall, and F-Score for different thresholds on the Kosarak and Retail Market Basket datasets respectively, where $\epsilon=4$. Moreover, Figure~\ref{ekosarak} and~\ref{eretail} show the results for different privacy budgets, where the threshold is chosen as the corresponding significance threshold for each privacy budget.  We select the significance threshold because the frequencies smaller than the significance threshold are unreliable in OUE. OUE and OUE-Zero have the exactly same curves on these graphs, and thus we omit OUE for simplicity. 

Overall, OUE-$Calibrate$ outperforms OUE and OUE-Zero, i.e., OUE-$Calibrate$ has a better F-Score than OUE and OUE-Zero for different thresholds (especially thresholds that are smaller than the significance threshold) and privacy budgets. 
More specifically,  OUE-$Calibrate$ achieves higher Precision but lower Recall than OUE. The prior knowledge about the true item frequencies is that they follow a power-law distribution, which indicates that only a small fraction of items have high frequencies. Therefore, OUE-$Calibrate$ tends to reduce item frequencies of the high item frequencies estimated by OUE, via considering the prior knowledge. As a result, given a threshold, OUE-$Calibrate$ has a smaller number of True Positives and False Positives than OUE, which leads to higher Precision and lower Recall. The improvement of OUE-$Calibrate$ is more significant when the threshold is smaller. This is because OUE has unreliable estimations for the smaller frequencies, OUE-Zero simply resets them to be zero, while OUE-$Calibrate$ uses prior knowledge to recalibrate them. 

When the threshold is larger than the significance threshold, OUE-Zero and OUE have the same performance. 
When the threshold is smaller than the significance threshold, OUE-Zero has higher Precision but lower Recall when the threshold is smaller. This is because OUE-Zero resets all frequencies smaller than the significance threshold to be zero. Therefore, OUE-Zero predicts the same set of items as heavy hitters for all thresholds that are smaller than the significance threshold. As the threshold becomes smaller, the True Positives would increase and False Positives would decrease, which explains the increasing Precision. However, False Negatives also increase and they increase faster than True Positives, which explains the decreasing Recall of OUE-Zero.

\section{Conclusion and Future Work}
Frequency estimation with local differential privacy (LDP) is a basic step in privacy-preserving data analytics without a trusted data collector. 
In this work, we propose $Calibrate$ to calibrate item frequencies estimated by an existing LDP algorithm. Our $Calibrate$ incorporates prior knowledge about noise in the estimated item frequencies and prior knowledge about true item frequencies through statistical inference. We show that such prior knowledge can be modeled as two probability distributions, respectively; and the two probability distributions can be estimated via integrating techniques from statistics and signal processing.  Our empirical results on both synthetic and real-world datasets demonstrate that  our $Calibrate$ can reduce estimation errors of state-of-the-art LDP algorithms by orders of magnitude. An interesting future work is to generalize our $Calibrate$ to calibrate results for other data analytics tasks such as frequent pattern mining. 

\noindent
{\bf Acknowledgements:} We thank the anonymous reviewers for their insightful reviews. This work was supported by NSF grant No. 1801584.

{
\balance{
\bibliographystyle{IEEEtranS}
\bibliography{refs}
}}

\appendices
\section{Proof of Theorem 1}
\label{prooftheorem1}
The proof of Theorem 1 leverages the following Lemma from probability theory~\cite{degroot2011probability}: 
\begin{lemma}
\label{lemma1}
Suppose $X$ is a random variable. $E(X-t)^2$ reaches its minimum value when $t=\mu$, where $\mu=E(X)$.
\end{lemma}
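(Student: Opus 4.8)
The plan is to treat $g(t) = E(X-t)^2$ as a function of the real variable $t$ and to show directly that it attains its minimum at $t = \mu$. Two routes are available, and I would use the algebraic ``add and subtract the mean'' decomposition rather than calculus, because it yields the result without differentiation and simultaneously exposes the minimum value.

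First I would write $X - t = (X - \mu) + (\mu - t)$ and expand the square inside the expectation, pulling the deterministic quantity $(\mu - t)$ out by linearity since it does not depend on the randomness of $X$:
\begin{align}
E(X-t)^2 = E\big[(X-\mu)^2\big] + 2(\mu - t)\,E(X - \mu) + (\mu - t)^2. \nonumber
\end{align}
Next I would invoke the defining property of the mean, namely $E(X - \mu) = E(X) - \mu = 0$, which annihilates the cross term. This leaves
\begin{align}
E(X-t)^2 = E\big[(X-\mu)^2\big] + (\mu - t)^2. \nonumber
\end{align}
The first summand is a constant independent of $t$, while the second is nonnegative and vanishes if and only if $t = \mu$. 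Hence $E(X-t)^2 \geq E\big[(X-\mu)^2\big]$ for every $t$, with equality exactly at $t = \mu$, which is precisely the claim.

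There is no substantive obstacle here: the argument rests only on linearity of expectation and the fact that a squared real number is nonnegative. The single point I would state carefully is the implicit assumption that $X$ has finite second moment, so that $E(X-t)^2$ is finite and $\mu = E(X)$ is well defined; I would flag this as a standing hypothesis. If a calculus-based presentation were preferred instead, one could expand $g(t) = E(X^2) - 2\mu t + t^2$ and solve $g'(t) = -2\mu + 2t = 0$, with $g''(t) = 2 > 0$ confirming a strict minimum; but the decomposition above is cleaner and has the added benefit of identifying the minimum value as the variance of $X$, which is exactly the quantity that the downstream optimality theorem will invoke.
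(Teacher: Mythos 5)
Your proof is correct, but it takes a different route from the paper's. The paper expands $E(X-t)^2 = E(X^2) - 2t\mu + t^2$, differentiates with respect to $t$, and sets the derivative $-2\mu + 2t$ to zero to conclude $t = \mu$ --- exactly the calculus-based alternative you sketch at the end (the paper does not even check the second-order condition, implicitly relying on the expression being a convex quadratic in $t$). Your main argument instead uses the decomposition $X - t = (X-\mu) + (\mu - t)$, kills the cross term via $E(X-\mu)=0$, and reads off
\begin{align}
E(X-t)^2 = E\bigl[(X-\mu)^2\bigr] + (\mu - t)^2, \nonumber
\end{align}
which buys you three things the paper's proof delivers only implicitly or not at all: the minimum is \emph{global} rather than merely a stationary point, it is attained \emph{uniquely} at $t=\mu$, and the minimum value is identified as $\mathrm{Var}(X)$. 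Your explicit flagging of the finite-second-moment hypothesis is also a point of care the paper omits; without $E(X^2)<\infty$ the quantity being minimized need not be finite. Both arguments are valid and short; the paper's is marginally quicker to write down, while yours is more self-contained (no differentiation under the expectation to justify, even informally) and more informative.
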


\begin{proof}

For any value of $t$, we have:
\begin{align}
E(X-t)^2 &=E(X^2-2tX+t^2) \nonumber \\
		&=E(X^2)-2t\mu+t^2
\end{align}
Then, we take derivative with respect to $t$:
\begin{align}
\label{partialn}
\frac{\partial E(X-t)^2}{\partial t}=-2\mu+2t
\end{align}
$E(X-t)^2$ reaches its minimum value when the derivative in Equation~\ref{partialn} is 0.  
By setting the derivative in Equation~\ref{partialn} to be $0$, we get $t=\mu$. 
Therefore, $E(X-t)^2$ reaches its minimum value when $t=\mu$.
\end{proof}

We can prove Theorem~\ref{optimality} via leveraging Lemma~\ref{lemma1}. Specifically, we can view the random variable $f$ in Theorem~\ref{optimality} as the random variable $X$ in Lemma~\ref{lemma1}, where the randomness is conditioned on that $\hat{f}=\hat{f}_i$. Therefore, $E((f' - f)^2 | \hat{f} = \hat{f}_i)$ reaches its minimum value when $f'=E(f|\hat{f} = \hat{f}_i)=\tilde{f}_i$.

\end{document}